\newtheorem{definition}{Definition}
\newtheorem{lemma}{Lemma}
\newtheorem{theorem}{Theorem}
\begin{document}
%
\title{Efficient Unit Commitment Constraint Screening\\ under Uncertainty}
\author{Xuan He$^1$, 
 Honglin Wen$^2$, 
 Yufan Zhang$^3$, 
 Yize Chen$^4$, 
 and Danny H. K. Tsang$^1$
         \thanks{X. He and D.H.K. Tsang are with the Informuation Hub, Hong Kong University of Science and Technology (Guangzhou). H.Wen is with Shanghai Jiao Tong University. Y. Zhang is with University of California San Diego. Y. Chen is with University of Alberta, email: yize.chen@ualberta.ca.}\vspace{-15pt}}

\markboth{In submission}%
{Shell \MakeLowercase{\textit{et al.}}: Bare Demo of IEEEtran.cls for IEEE Journals}

\maketitle

\vspace{-10pt}\begin{abstract}
Day-ahead unit commitment (UC) is a fundamental task for power system operators, where generator statuses and power dispatch are determined based on the forecasted nodal net demands. The uncertainty inherent in renewables and load forecasting requires the use of techniques in optimization under uncertainty to find more resilient and reliable UC solutions. However, the solution procedure of such specialized optimization may differ from the deterministic UC. The original constraint screening approach can be unreliable and inefficient for them. Thus, in this work we design a novel screening approach under the forecasting uncertainty. Our approach accommodates such uncertainties in both chance-constrained (CC) and robust forms (RO), and can greatly reduce the UC instance size by screening out non-binding constraints. To further improve the screening efficiency, we utilize the multi-parametric programming (MPP) theory to convert the underlying optimization problem of the screening model to a piecewise affine function. A multi-area screening approach is further developed to handle the computational intractability issues for large-scale problems. We verify the proposed method’s performance on a variety of UC setups and uncertainty situations. Experimental results show that our robust screening procedure can guarantee better feasibility, while the CC screening can produce more efficient reduced models. The average screening time for a single line flow constraint can be accelerated by 71.2X to 131.3X using our proposed method.
\end{abstract}

\begin{IEEEkeywords}
Unit commitment, uncertainty, optimization
\end{IEEEkeywords}

\section{Introduction}
\label{sec1}
Solving the unit commitment (UC) problem in an efficient manner is a fundamental yet challenging task for power system operators~\cite{9617122, bertsimas2012adaptive}. On the one hand, with the proliferation of renewable generation along with demand-side innovations such as electric vehicles (EVs) and smart thermostats, obtaining reliable UC solutions is a nontrivial task under an increasing level of forecasting uncertainties~\cite{conejo2010decision}. On the other, UC instances can be NP-hard due to their mixed integer programming (MIP) nature~\cite{pia2017mixed}, particularly in large-scale systems with numerous transmission constraints, which may render the UC model intractable.

Traditionally in the day-ahead stage, unit commitment using deterministic forecasting has been implemented and worked well for bulk power systems in terms of both solution quality and solution time~\cite{zhao2013unified}. Yet with increasing uncertainty, this approach often results in violations of system security constraints during real-time operation, or giving uneconomical decisions due to mismatch between forecasts and realizations. It is thus of operational and economic benefit to \emph{explicitly account for the uncertainty} during operational planning. Thus both industries and research have been looking into UC under stochastic load and generations~\cite{zheng2014stochastic}. Researchers have resorted to optimization techniques such as chance-constrained optimization~\cite{geng2019chance} and robust optimization~\cite{bertsimas2012adaptive, zeynali2023distributionally} to explicitly consider the uncertainty and most of them may be accompanied by specialized recourse policy \cite{roald2023power}. Their solution procedure can be different from the deterministic UC but still computationally challenging, while there are limited discussions on acceleration for such a solving process.

In this paper, we address the following research question:

\emph{How can we accelerate UC solution process while explicitly considering forecasting uncertainty?}

To investigate this question, we resort to finding \emph{reduced problems} for the uncertain UC problems. In particular, our technique is based on the observation that in real-world systems, only a subset of security constraints such as line flow limits are binding or active, while eliminating the non-binding constraints would not change UC problem's solution. Identifying an appropriate subset of line flow limits is known as \emph{constraint screening}. We aim to explore the influence of net demand uncertainty on constraint screening for uncertain UC formulations. In addition, we propose an efficient approach that can not only respect the uncertainty formulation in UC instances, but also greatly accelerate the screening procedure.

Existing work on constraint screening focuses on the deterministic UC formulation. And researchers have developed approaches to conduct screening more efficient or more sufficient~\cite{zhai2010fast,porras2021cost,roald2019implied,10298823}. \cite{zhai2010fast} proposes to solve a screening model maximizing the (bidirectional) line flow for every network constraint for a specific net demand. If the maximal line flow does not reach the line limit, this limit will be identified as redundant. \cite{porras2021cost, he2022enabling} add  UC cost bounds to such a screening model to eliminate more constraints. Some recent approaches aim to improve screening efficiency by using machine learning models, such as directly classifying constraint sets~\cite{zhang2019data, ardakani2018prediction}, predicting total system costs~\cite{pineda2020data, he2022enabling}, and some warm starting strategies~\cite{xavier2021learning, Cordero2022Warm-starting}. 

\begin{figure*}[tb] \label{fig: intro}
	\centering
	\includegraphics[width=1\linewidth]{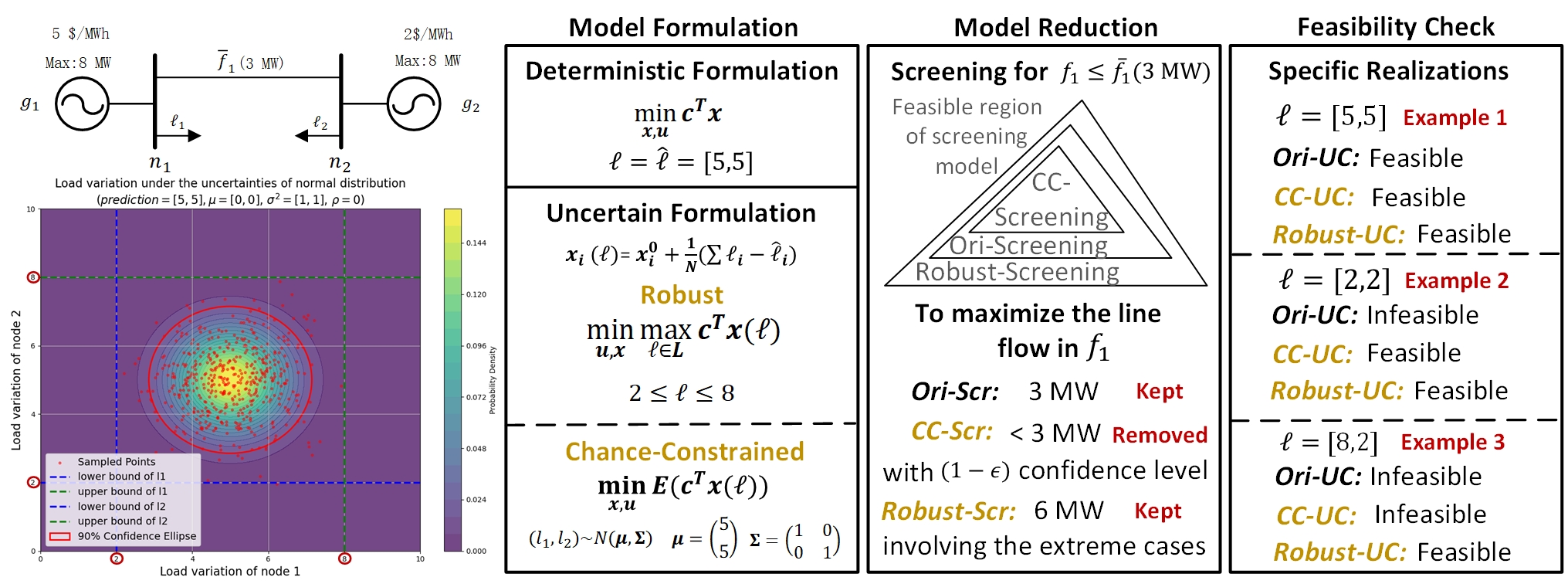}
	\caption{\footnotesize Our intuition based on a two-node UC problem under net demand uncertainty. The deterministic UC model, based on predicted net demand, may yield infeasible solutions (Example 2-3). The RO-UC model and CC-UC model are developed to obtain solutions that can be reliably adjusted to feasible ones using predefined recourse policies $\boldsymbol{x}(\boldsymbol{\ell})$, typically linear decision rules when the exact net demand is known. 
 We propose to conduct constraint screening for different formulations to reduce their complexity. The robust screening can be valid for all formulations and the RO-UC can be feasible under small-probability events (Example 3), while the CC screening may achieve a smaller reduced model equivalent to the original CC-UC model (Example 1-2).}
\end{figure*}
 
However, previous screening methods cannot be applied to UC formulations under uncertainty such as robust (RO) \cite{zhao2013multi, moretti2020efficient} and chance-constrained (CC) models \cite{ozturk2004solution, wang2016risk}, as depicted in Fig. \ref{fig: intro}. Robust UC formulation is a computationally viable
methodology that provides solutions deterministically immune to
any realization of uncertainty within a defined set of uncertainties. CC-UC formulation can better utilize the prior knowledge of the uncertainty distribution, though it may fail in some extreme cases. These uncertain formulations are widely used in UC problems, but so far they have not been sufficiently investigated as to how the prevailing uncertainty would impact the screening results for these formulations. The existing consideration mainly develops different uncertainty sets in the screening model for the deterministic UC model \cite{roald2019implied, 10298823}, which requires resolving the reduced UC model for the exact realization. The resulting screening results may not be efficient or reliable for the uncertain UC formulations, especially for the CC-UC formulation with a specific distribution and a recourse policy that can directly adjust the initial generation schedule for an exact realization.

By reformulating the screening model for the uncertain UC formulations considering the recourse policies, we can achieve a more suitable screening result. However, there still exists the computation burden arising from solving the optimization problem of the screening model for each line and each net demand instance. \cite{9314247} proposes to replace the optimization model solving with matrix operations, while in \cite{6938802}, screening time mainly depends on the number of non-redundant constraints rather than the number of lines. Data-driven approaches \cite{zhang2019data, ardakani2018prediction, he2023fast} can also speed up the screening procedure, but there may not be enough historical data for large-scale systems. Besides, these reduced UC models may not be equivalent to the original one \cite{zhang2019data, ardakani2018prediction, 9314247} or produce a more complicated screening model with considerable binary variables \cite{6938802}. This motivates us to find more efficient approaches to accelerate the screening procedure with a guarantee of equivalence.

In this work, we show that constraint screening is readily achievable for UC formulations with uncertain net demand involved. More importantly, we show both screening and UC problem-solving can be greatly accelerated using our proposed techniques. The main contributions of our work can be summarized as follows: 
\begin{itemize}
\item [1)] We are the first to conduct constraint screening for the UC formulations under uncertainty. RO-screening and CC-screening models are developed to reduce their original models reliably. The results show that the RO-screening can achieve 100\% feasibility for tested load samples, while the CC-screening can find more redundant constraints.


\item [2)] A multi-parametric programming (MPP) method is adapted to convert each screening model to a piecewise affine function that maps the input of net demand to the maximal line flow. The results show that the screening procedure can be accelerated by 5.9X to 54.3X compared to directly solving the screening model.


\item [3)] A decomposition-based multi-area screening is proposed to accelerate the screening procedure reliably for large-scale systems. The results show that the total screening time can be reduced by 6.9\%, while the comparable scales and the equivalent solutions can be achieved for the reduced model compared to the cases without decomposition.

\end{itemize}

\section{Preliminaries}
\label{sec:prelim}
\subsection{Deterministic UC model}
In the deterministic UC formulation, the system operators need to decide both the ON/OFF statuses (the commitment schedule) $\mathbf{u}$ as well as dispatch level  $\mathbf{x}$ for all generators to find the least total costs for time horizon $T$, corresponding to generators' cost vector $\mathbf{c}$. Without loss of generality, we denote $x_i(t)$ as the generator $i$
's generation at timestep $t$. We follow the typical modeling assumption to model power flows as a DC approximation, where $a_{i,j}$ denotes the entry in the Power Transfer Distribution Factors (PTDF) matrix~\cite{wang2012computational}. $\overline{\mathbf{f}}_j$ and $\underline{\mathbf{f}}_j$ denote the upper and lower bounds of line flow. We assume that each node has the generation and for the node without generator, we let the corresponding generation bound $\overline{x}_i =0$. $i, j$ denote the index of the bus and line, respectively. For a deterministic UC instance with $N$ generators involved, the problem can be formulated as 

\begin{subequations}
\label{UC}
    \begin{align}
\min _{\mathbf{u}, \mathbf{x}}\quad & \sum_{t=1}^{T}\sum_{i=1}^N  c_i x_{i}(t) \label{UC:obj_0}\\
\text { s.t. } \quad  &u_i \underline{x}_i \leq x_i(t) \leq u_i \bar{x}_i, \quad \forall i, t, \label{UC:gen_0}\\
 &\overline{\mathbf{f}}_{j} \leq \sum_{i=1}^{n}a_{i,j}(x_i(t)-\hat{\boldsymbol{\ell}}_i(t)) \leq \overline{\mathbf{f}}_{j}, \quad \forall j,t,\label{UC:flow_0}\\
 &\sum_{i=1}^{n}x_{i}(t) - \sum_{i=1}^{n}\hat{\boldsymbol{\ell}}_i(t) =0, \quad \forall t, \label{UC:balance_0}\\
 & u_i(t) \in \{0, 1\}, \quad \forall i,t, \label{UC:u_0}\\
& x \in \boldsymbol{X}_T. \label{UC: temproal}
\end{align}
\end{subequations}
where constraints \eqref{UC:gen_0}, \eqref{UC:flow_0} and \eqref{UC:balance_0} denote the generation bound, line flow limits and the system power balance, respectively.   \eqref{UC:u_0} enforces the binary constraint of the generator statuses, where $u_i=1$ indicates that the generator is on. \eqref{UC: temproal} denotes the temporal constraint set of generations such as ramping constraints. Note that in the deterministic case, $\hat{\boldsymbol{\ell}}_i(t)$ denotes the net demand considering the load and renewable generation, which is assumed forecasted perfectly and is regarded as the ground truth $\boldsymbol{\ell}_i(t)$, i.e., $\hat{\boldsymbol{\ell}}_i(t) = \boldsymbol{\ell}_i(t)$. 

\subsection{Uncertainties from Net Demand and Affine Recourse Policy}
However, as mentioned above, operating conditions of the UC problem can be highly influenced by intermittent renewable generation and stochastic load, meaning $\hat{\boldsymbol{\ell}}_i(t) \neq \boldsymbol{\ell}_i(t)$. Thus, the solution obtained by \eqref{UC} can be infeasible for the ground truth $\boldsymbol{\ell}_i(t)$. To address this issue, UC problems considering variations in net demand of load and renewable generation are developed. For simplification, we only consider the single-step UC problem while the method can be flexibly extended to incorporate the multi-step UC problem. The relationship between the forecast, ground truth, and uncertainty is represented as
\begin{align} \label{UC:uncertainty}
\hat{\boldsymbol{\ell}} =\boldsymbol{\ell}+\boldsymbol{\omega}.
\end{align}

We assume that controllable generators can provide a response $\mathbf{x}(\boldsymbol{\omega})$ to adapt their generation
to the realization of the uncertainty $\boldsymbol{\omega}$. In particular, the response
from the generators can ensure that $\mathbf{x}(\boldsymbol{\omega})$ and the ground-truth $\boldsymbol{\ell}$ yield a solution that satisfies
(1d). We denote the total power mismatch due to prediction errors as $\Omega = \sum_{i=1}^{N}\omega_i$, which can be distributed among the generators based on participation factors $\alpha$ according to the following generation response policy:
\begin{align} \label{UC:ge_control}
x_i(\boldsymbol{\omega}) = x_i + \alpha_i\Omega.
\end{align}
where $\alpha_i$ is the participation factor to ensure that a given mismatch is balanced by the same amount of reserve activation, and we have $\sum_{i=1}^{N}\alpha_i = 1$.
Typically, $\boldsymbol{\alpha}$ can be the decision variables to optimize or a specified vector \cite{8600344}. In this paper we consider the latter case, for the generator participating in balancing the mismatch, we have $\alpha_i = \frac{1}{|G|}$, where $|G|$ denotes the number of the participating controllable generators, and for the others $\alpha_i = 0$.

\subsection{Robust UC Model}
Given $\hat{\boldsymbol{\ell}}$, robust formulation includes all possible scenarios of uncertain renewable generation and demand forecasts, and optimizes the UC cost in the worst case, which results in the following min-max formulation:
\begin{subequations}\label{UC_robust}
\begin{align}
\min_{\mathbf{u},\mathbf{x}} \max_{\boldsymbol{\omega}} \quad & \sum_{i=1}^N c_ix_{i}(\boldsymbol{\omega}) \label{UC:ro_obj}\\
\text { s.t. } \quad
&u_i \underline{x}_i \leq x_i(\boldsymbol{\omega}) \leq u_i \bar{x}_i, \label{UC:ro_cons}\\
&\overline{\mathbf{f}}_{j} \leq \sum_{i=1}^{N}a_{i,j}(x_i(\boldsymbol{\omega})-\hat{\boldsymbol{\ell}}_i + \omega_i) \leq \overline{\mathbf{f}}_{j}, \label{UC:ro_flow}\\
&\sum_{i=1}^{N}x_{i}(\boldsymbol{\omega})-\sum_{i=1}^{N}(\hat{\boldsymbol{\ell}_i}-\omega_i)=0,\label{UC:ro_balance}\\
&u_i \in \{0, 1\} \label{UC:ro_u}.
\end{align}
\end{subequations}
In this work's robust formulation, we assume that the ground-truth load vector $\boldsymbol{\ell}$ satisfies the following box uncertainty set, while our method is actually generalizable to a variety of definitions of uncertainty set defined in the domain of robust optimization~\cite{bertsimas2012adaptive}: 
\begin{align} \label{UC:uncertainty_box}
\boldsymbol{\beta}_1 \cdot \hat{\boldsymbol{\ell}} \leq \hat{\boldsymbol{\ell}}-\boldsymbol{\omega}\leq \boldsymbol{\beta}_2\cdot\hat{\boldsymbol{\ell}}.
\end{align}
where $\cdot$ denotes the element-wise multiplication, and $\boldsymbol{\beta}_1$ and $\boldsymbol{\beta}_2$ are scalar vectors associated with each dimension of the forecasted load. Also, note that solving \eqref{UC_robust} is nontrivial and more time-consuming than solving the original UC problem \eqref{UC}. Thus it is of both research and practical interest to design an acceleration strategy. In the latter sections, we will describe how screening methods can reduce the constraint set of the robust UC formulation, and speed up the robust UC problem solving.


\subsection{Chance-Constrained UC model}
Decisions induced by such robust problems can be conservative due to the considerations of worst-case scenarios, thus increasing both the generation reserves and system costs. As such extreme cases are the tails of forecasting distributions, in this paper, a general screening technique also applicable to the chance-constrained formulation of the UC problem is also investigated \cite{bienstock2014chance}. CC formulation explicitly limits the probability of constraint violations. Technically, chance constraints depict the maximum allowable violation probability of inequality constraints and reduce the feasible space of the UC problem to a desired confidence region. Mathematically, the CC-UC problem can be formulated as
\begin{subequations}
\label{UC_chance}
\begin{align}
\min _{\mathbf{u}, \mathbf{x}} \quad & \sum_{i=1}^N \mathbb{E}(c_i x_{i}(\boldsymbol{\omega})) \label{UC:obj}\\
\text { s.t. }\quad  &\mathbf{Pr}(x_i(\boldsymbol{\omega}) \geq u_i\underline{x}_i) \geq 1-\epsilon_x, \label{UC:chance_cons}\\
~~~~&\mathbf{Pr}(x_i(\boldsymbol{\omega}) \leq u_i \bar{x}_i) \geq 1-\epsilon_x, \\
~~~&\mathbf{Pr}(f_j(\boldsymbol{\omega}) \geq -\overline{\mathbf{f}}_{j}) \geq 1- \epsilon_f, \label{UC:chance_line}\\
~~~&\mathbf{Pr}(f_j(\boldsymbol{\omega}) \leq \overline{\mathbf{f}}_{j}) \geq 1- \epsilon_f, \label{UC:chance_line2} \\
~~~&f_j(\omega) = \sum_{i=1}^{N}a_{i,j}(x_i(\boldsymbol{\omega})-\hat{\ell}_i+\omega_i),\\
&\sum_{i=1}^{N}x_{i}(\boldsymbol{\omega})-\sum_{i=1}^{N}(\hat{\boldsymbol{\ell}}_i-\omega_i) = 0, \label{UC:chance_balance}\\
&u_i \in \{0, 1\} \label{UC:chance_u}.
\end{align}
\end{subequations}
The constraints on the controllable generations and line flows are enforced using the separate chance constraints \eqref{UC:chance_cons}-\eqref{UC:chance_line2}. In \eqref{UC:chance_balance}, the generation response $x_i(\boldsymbol{\omega})$ is selected in a way that maintains the power balance for the possible realization of uncertainty and the response of the system. The chance constraints guarantee that the constraint can satisfy a prescribed probability. The level of risk associated with the chance constraint can be regulated by selecting the probability of violation $\epsilon_x$ and $\epsilon_f$ \cite{8600344}. It is noteworthy that for the two formulations considered in this work, the RO-UC model can handle a larger region of uncertainties than the CC-model considering a specific distribution with an acceptable confidence level.

\section{Constraint Screening under Uncertainty}
As can be seen in Section \ref{sec:prelim}, all deterministic and uncertain UC formulations can be large-scale MILP problems that are cumbersome to solve within satisfactory timeframes. Meanwhile, there exist many redundant or inactive constraints as illustrated in Fig. \ref{fig: FR}, giving the potential to accelerate problem solving of \eqref{UC}, \eqref{UC_robust} and \eqref{UC_chance} by screening out such constraints. In this section, we show it is practical to screen out a large number of constraints in these uncertainty-aware UC formulations, which greatly improve solution efficiency for such UC problems.

To identify each redundant line limit under the deterministic case, previous optimization-based screening models \cite{zhai2010fast} evaluate whether line flows $f_{j}=\sum_{i=1}^{n}a_{i,j}(x_i-\hat{\boldsymbol{\ell}}_i)$ will hit the limits given the forecasted $\hat{\boldsymbol{\ell}}$: 
\begin{subequations} \label{screening2}
\begin{align}
\max_{\mathbf{u}, \mathbf{x}}~ \text{or} \;\min _{\mathbf{u}, \mathbf{x}}\quad & f_j\\
\text { s.t. } \quad &\eqref{UC:gen_0}, \eqref{UC:balance_0},\\
 &\overline{\mathbf{f}}_{k} \leq \sum_{i=1}^{N}a_{i,k}(x_i-\hat{\boldsymbol{\ell}}_i) \leq \overline{\mathbf{f}}_{k}, \quad k \neq j, \label{Screening2:flow_0}\\
& 0\leq u_i \leq 1. \label{Screening2:u}
\end{align}
\end{subequations}
where $\hat{\boldsymbol{\ell}}$ is a known net demand vector for UC problem. Note that \eqref{Screening2:u} relaxes $u_i$ as continuous variables and thus \eqref{screening2} is a tractable linear programming problem. 


Inspired by the derivation of the above screening model, in this work we develop the RO-screening and the CC-screening models and we are one of the first to look into such screening problems and to reformulate these models.

\begin{figure}[]
    \hspace{0.35cm}
	\centering
	\includegraphics[width=0.99\linewidth]{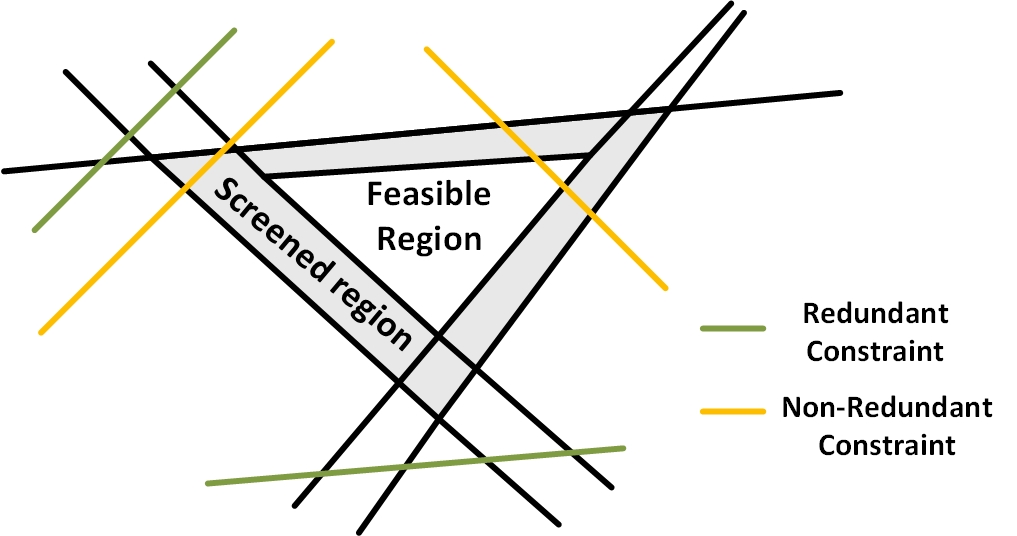}
	\caption{\footnotesize Redundant constraints identified by the screening models. Feasible regions are defined by all the UC model's constraints, while screened regions are defined by screening model constraints that are part of the UC model.}
	\label{fig: FR}
 
\end{figure}

\subsection{Robust UC Constraint Screening}
According to the optimization-based screening approach, the screening model for the robust UC model can be formulated as follows, 
\begin{subequations}\label{UC_Scr_ori_robust}
\begin{align}
\max_{\mathbf{u},\mathbf{x}, \boldsymbol{\omega}}~\text{or} ~\min_{\mathbf{u},\mathbf{x}, \boldsymbol{\omega}} \quad &f_j \label{UC:obj}\\
\text { s.t. } \;  & \eqref{UC:ro_cons}, \eqref{UC:ro_balance},\eqref{UC:uncertainty_box},\eqref{UC:ge_control},\\
 &-\overline{\mathbf{f}}_{k} \leq \sum_{i=1}^{N}a_{i,k}(x_i(\boldsymbol{\omega})-\hat{\boldsymbol{\ell}}_i + \omega_i) \leq \overline{\mathbf{f}}_{k}, \quad k \neq j, \label{UC:flow_ro}\\
&~0\leq u_i \leq 1. \label{UC: Screening_ro:u}
\end{align}
\end{subequations}
This formulation can identify whether the line flow will hit the limits for all possible realizations satisfying \eqref{UC:uncertainty_box} and \eqref{UC:ge_control}. To further simply this model and enable the screening results valid for more extreme cases where the recourse policy \eqref{UC:ge_control} is absent, the RO-screening model can be relaxed as
\begin{subequations}\label{UC_Scr_robust}
\begin{align}
\max_{\mathbf{u},\mathbf{x},\boldsymbol{\omega}}~\text{or} ~\min_{\mathbf{u},\mathbf{x},\boldsymbol{\omega}} \quad &f_j \label{UC:obj}\\
\text { s.t. } \;  & \eqref{UC:ro_cons}, \eqref{UC:ro_balance},\eqref{UC:uncertainty_box},\\
 &-\overline{\mathbf{f}}_{k} \leq \sum_{i=1}^{N}a_{i,k}(x_i-\hat{\boldsymbol{\ell}}_i + \omega_i) \leq \overline{\mathbf{f}}_{k}, \quad k \neq j, \label{UC:flow_ro}\\
&~0\leq u_i \leq 1. \label{UC: Screening_ro:u}
\end{align}
\end{subequations}

We also have the following Lemma to guarantee the feasibility conditions of the resulting screening model \eqref{UC_Scr_robust}:

\begin{lemma} \label{proof_robust}
Denote the non-redundant line limits of the robust model \eqref{UC_robust} as $S_{RO-REAL}$, and the non-redundant line limits identified by the screening model \eqref{UC_Scr_robust} as $\overline{S}_{RO}$. Then $S_{REAL-RO} \subseteq \overline{S}_{RO}$. That is, the constraint screening results obtained by \eqref{UC_Scr_robust} can guarantee the feasibility of \eqref{UC_robust} unchanged. 
\end{lemma}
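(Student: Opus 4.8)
The plan is to prove the lemma in contrapositive form: if the screening model \eqref{UC_Scr_robust} certifies line $j$ as redundant --- meaning its optimal value lies within $[-\overline{\mathbf{f}}_j,\overline{\mathbf{f}}_j]$ for both the maximization and the minimization of $f_j$, call these optima $\overline{V}_j^{\max}$ and $\overline{V}_j^{\min}$ --- then the line-$j$ limit is genuinely redundant in the robust model \eqref{UC_robust}, i.e. $j\notin S_{RO-REAL}$. Recall that a line limit is non-redundant in \eqref{UC_robust} exactly when some $(\mathbf{u},\mathbf{x})$ is feasible for \eqref{UC_robust} with that limit deleted yet still violates it, i.e. its recourse-adjusted line-$j$ flow leaves $[-\overline{\mathbf{f}}_j,\overline{\mathbf{f}}_j]$ for some admissible $\boldsymbol\omega$. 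So it suffices to show that \eqref{UC_Scr_robust} is a \emph{relaxation} of ``the robust-feasible set with constraint $j$ removed'', in the precise sense that $\overline{V}_j^{\max}$ (resp. $\overline{V}_j^{\min}$) over-estimates (resp. under-estimates) every recourse-adjusted line-$j$ flow attainable by such a point.

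First I would fix an arbitrary $(\mathbf{u},\mathbf{x})$ that is feasible for \eqref{UC_robust} with the line-$j$ constraint removed, and an arbitrary realization $\boldsymbol\omega^{\star}$ in the box set \eqref{UC:uncertainty_box}; the robust constraints then hold at this particular $\boldsymbol\omega^{\star}$ together with the induced recourse dispatch $x_i(\boldsymbol\omega^{\star})=x_i+\alpha_i\Omega^{\star}$ of \eqref{UC:ge_control}. I would then exhibit a point of the screening model attaining exactly the line-$j$ flow $f_j(\boldsymbol\omega^{\star})$: take the screening uncertainty $\boldsymbol\omega:=\boldsymbol\omega^{\star}$, keep $\mathbf{u}$ unchanged (legal because $\{0,1\}\subseteq[0,1]$, so the relaxed bound $0\le u_i\le 1$ in \eqref{UC_Scr_robust} holds), and --- the step that compensates for the now-absent recourse policy --- set the screening dispatch equal to the recourse-adjusted output $x_i(\boldsymbol\omega^{\star})$. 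Then \eqref{UC:ro_cons} holds because it held for the robust solution at $\boldsymbol\omega^{\star}$; the balance \eqref{UC:ro_balance} holds for the same reason; \eqref{UC:uncertainty_box} holds by the choice of $\boldsymbol\omega^{\star}$; and each remaining flow constraint \eqref{UC:flow_ro} for $k\neq j$ holds because, after this substitution, it coincides with the robust flow constraint for line $k$ evaluated at $\boldsymbol\omega^{\star}$, which is satisfied by assumption.

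Consequently $\overline{V}_j^{\min}\le f_j(\boldsymbol\omega^{\star})\le\overline{V}_j^{\max}$, and the certification hypothesis $-\overline{\mathbf{f}}_j\le\overline{V}_j^{\min}$ and $\overline{V}_j^{\max}\le\overline{\mathbf{f}}_j$ forces $-\overline{\mathbf{f}}_j\le f_j(\boldsymbol\omega^{\star})\le\overline{\mathbf{f}}_j$. Since $(\mathbf{u},\mathbf{x})$ and $\boldsymbol\omega^{\star}$ were arbitrary, the line-$j$ limit is implied by the other constraints of \eqref{UC_robust}, so deleting it does not enlarge the robust feasible set; hence $j\notin S_{RO-REAL}$, which is the contrapositive of the claim. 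The reverse direction --- a genuinely binding limit is always retained --- follows from the same point-to-point correspondence, so no needed constraint is ever screened out, although the screening may conservatively keep some redundant ones.

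The step I expect to be the main obstacle is the careful handling of the two relaxations separating \eqref{UC_Scr_ori_robust} from \eqref{UC_Scr_robust}: dropping the affine recourse \eqref{UC:ge_control} and relaxing $\mathbf{u}$. The integrality part is harmless. The recourse part is delicate because the screening flow constraints are written in the bare dispatch $x_i$ rather than in $x_i(\boldsymbol\omega)$; evaluating the screening model at the recourse-adjusted dispatch of the robust solution, as above, is exactly what lets the quantifier change --- from ``for all $\boldsymbol\omega$'' in \eqref{UC_robust} to ``there exists $\boldsymbol\omega$ achieving the optimum'' in \eqref{UC_Scr_robust} --- go through cleanly. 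I would also pin down at the outset that the screening objective is $f_j=\sum_{i=1}^{N}a_{i,j}(x_i-\hat{\boldsymbol{\ell}}_i+\omega_i)$, consistently with \eqref{UC:flow_ro}, and state the redundancy definition explicitly, so that the phrase ``feasibility of \eqref{UC_robust} unchanged'' is unambiguous.
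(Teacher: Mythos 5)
Your proof is correct and follows essentially the same route as the paper's: both rest on the observation that the screening model \eqref{UC_Scr_robust} is a relaxation of the robust model, so every recourse-adjusted line-$j$ flow attainable in \eqref{UC_robust} is bracketed by the screening optima, and hence no binding limit can be certified redundant. You phrase the argument as a contrapositive rather than a contradiction and are more explicit about substituting the recourse-adjusted dispatch $x_i(\boldsymbol{\omega}^{\star})$ for the screening dispatch variable --- a detail the paper's proof glosses over --- but the underlying idea is identical.
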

\begin{proof}
 Lemma \ref{proof_robust} can be proved by contradiction. Assume that $S_{REAL-RO} \nsubseteq \overline{S}_{RO}$, then this case can occur: for a particular $\hat{\boldsymbol{\ell}}-\Tilde{\boldsymbol{\omega}}$, $f_j=\sum_{i=1}^{N}a_{i,j}(\Tilde{x}_i$$(\Tilde{\boldsymbol{\omega}})-\hat{\ell}_i+\Tilde{\omega}_i)=\overline{f}_j$ can hold in \eqref{UC_robust}, while in \eqref{UC_Scr_robust}, the obtained maximum of $f_j=\sum_{i=1}^{N}a_{i,j}({x}_i(\Tilde{\boldsymbol{\omega}})-\hat{\ell}_i+\Tilde{\omega}_i) < \overline{f}_j$. 
 
 However, it can be seen that the feasible solution of \eqref{UC_robust} will always be feasible for \eqref{UC_Scr_robust}. This means in \eqref{UC_Scr_robust} the maximum of $f_j=\sum_{i=1}^{N}a_{i,j}({x}_i(\Tilde{\boldsymbol{\omega}})-\hat{\ell}_i+\Tilde{\omega}_i) \geq \sum_{i=1}^{N}a_{i,j}(\Tilde{x}_i(\Tilde{\boldsymbol{\omega}})-\hat{\ell}_i+\Tilde{\omega}_i)=\overline{f}_j$. Clearly, $S_{REAL-RO} \subseteq \overline{S}_{RO}$ can be proved by the contradiction. Thus, the screening results given by \eqref{UC_Scr_robust} can guarantee the feasibility of \eqref{UC_robust} unchanged.
\end{proof}
Similarly, when \eqref{UC:uncertainty_box} is satisfied, it can be proved that the non-redundant line limits of \eqref{UC} is a subset of $\overline{S}_{RO}$ so that the reduced UC model Testing 3 in Fig. \ref{fig: UC_Screening_Models} have the same feasibility situation.
\begin{figure*}[tb]
    \hspace{-1.5em}
	\includegraphics[width=1.0\linewidth]{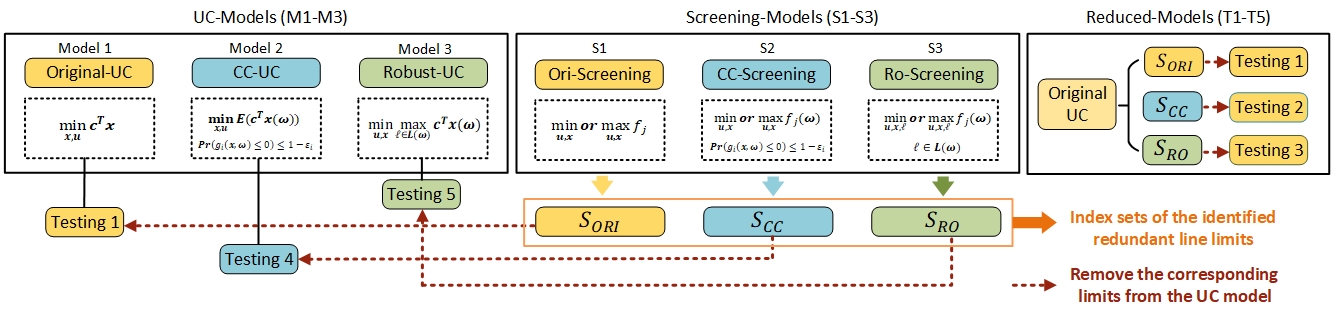}
	\caption{\footnotesize Relationship among the UC-Models, Screening-Models, and Reduced-Models under different uncertainty settings.} \label{fig: UC_Screening_Models}
\end{figure*}
\subsection{Chance-Constrained UC Constraint Screening}
Unlike the robust model, for the CC-UC model and its screening model, the uncertainty is formulated based on specific random distributions like the Gaussian distribution and the recourse policy is necessary to convert the chance constraints. We need to introduce a new set of optimization variables $r_i$, which represent the reserve capacity from each generation. Additionally, to guarantee adequate reserves for covering mismatches with high probability, we impose the following chance constraints:
\begin{subequations}\label{UC:r_chance}
\begin{align} 
&\mathbf{Pr}(\alpha_i\Omega \leq r_i) \geq 1 - \epsilon_x, \\
&\mathbf{Pr}(\alpha_i\Omega \geq -r_i) \geq 1 - \epsilon_x.
\end{align}   
\end{subequations}

We ensure that scheduled setpoints $x_i$ allow reserves $r_i$ without exceeding generation limits by enforcing:
\begin{equation} \label{chance_ri}
x_i - r_i \geq u_i \underline{x}_i,~ x_i + r_i \leq u_i \overline{x}_i. 
\end{equation}

Based on the above assumptions, the chance constraints on \eqref{UC:chance_line}-\eqref{UC:chance_line2} and \eqref{UC:r_chance} with a linear dependence on $\boldsymbol{\omega}$ can be exactly reformulated. Previous studies~\cite{bouffard2008stochastic, hodge2012comparison} indicated that the normal approximation of renewable generation and load forecasts were plausible in practice. In addition, the use of this assumption is because we are modeling a large number of geographically dispersed wind farms and electricity demands, and the law of large numbers holds~\cite{focken2002short}. In this work, we thus assume $\boldsymbol{\omega}$ follows a Gaussian distribution with mean $\mu_{\boldsymbol{\omega}}=0$ and known covariance matrix $\Sigma_{\boldsymbol{\omega}}$. The chance-constrained constraint screening model then can be given by: 
\begin{subequations} \label{UC:chance_equai}
\begin{align}
\max _{\mathbf{u}, \mathbf{x}, \mathbf{r}}~\text{or} \min _{\mathbf{u}, \mathbf{x}, \mathbf{r}} \quad &  f_j(\boldsymbol{0}) \label{UC:obj}\\
\text { s.t. }\quad  &u_i\underline{x}_i + r_i \leq x_i \leq u_i\overline{x}_i - r_i, \\
&r_i \geq \alpha_i\Phi^{-1}(1-\epsilon_x)\sigma_\Omega, \label{UC:screen_CC_r}
\\
&\mathbb{E}(f_k(\boldsymbol{\omega})) \geq -\overline{\mathbf{f}}_k + \Phi^{-1}(1-\epsilon_f)\sigma_{f_k(\boldsymbol{\omega})}, \; k \neq j ,\label{UC:screen_CC_line}\\ 
&\mathbb{E}(f_k(\boldsymbol{\omega}))\leq \overline{\mathbf{f}}_k- \Phi^{-1}(1-\epsilon_f)\sigma_{f_k(\boldsymbol{\omega})}, \; k \ne j ,\label{UC:screen_CC_line2}
\\&\mathbb{E}(f_k(\omega)) = \sum_{i=1}^{N}a_{i,k}(x_i-\hat{\ell}_i),\\
&\sum_{i=1}^{N}x_{i}-\sum_{i=1}^{N}\hat{\ell}_i = 0, \label{UC:chance_balance_screening}\\
& 0 \leq u_i \leq 1. \label{UC:chance_u_screening}
\end{align}   
\end{subequations}
where $\mathbb{E}(f_j(\boldsymbol{\omega}))=f_j(\boldsymbol{0})$. $\sigma_\Omega$ and $\sigma_{f_j(\boldsymbol{\omega})}$ are given as $\sigma_\Omega^{2} = \mathbf{1}^T\Sigma_{\boldsymbol{\omega}}\mathbf{1},\; \sigma_{f_j(\boldsymbol{\omega})}^{2} = \sum_{i=1}^{N}a^{2}_{i,j}(\sigma_{\omega_i}^{2}+\alpha^{2}_{i}\sigma^{2}_{\Omega}) $. $\Phi^{-1}(\cdot)$ denotes the inverse Gaussian cumulative distribution, which is used to explicitly transform \eqref{UC:r_chance}, \eqref{UC:chance_line}, \eqref{UC:chance_line2} to \eqref{UC:screen_CC_r}-\eqref{UC:screen_CC_line2} respectively. The reliability of the chance-constrained screening for the chance-constrained UC model can be guaranteed according to the following analysis:

\begin{lemma} \label{proof_cc}
Denote the non-redundant line limits of the chance-constrained model \eqref{UC_chance} as $S_{CC-REAL}$, and the non-redundant line limits identified by the screening model \eqref{UC:chance_equai} as $\overline{S}_{CC}$. Then $S_{CC-REAL} \subseteq \overline{S}_{CC}$. The constraint screening results obtained by \eqref{UC:chance_equai} can guarantee the feasibility of \eqref{UC_chance} unchanged. 
\end{lemma}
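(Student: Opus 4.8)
The plan is to follow the same contradiction argument used for Lemma~\ref{proof_robust}, preceded by a reduction of \eqref{UC_chance} to an equivalent deterministic program. First I would record that, under the stated Gaussian assumption $\boldsymbol{\omega}\sim\mathcal{N}(\boldsymbol{0},\Sigma_{\boldsymbol{\omega}})$ together with the affine recourse \eqref{UC:ge_control} and the reserve model \eqref{UC:r_chance}--\eqref{chance_ri}, each separate chance constraint in \eqref{UC_chance} collapses to an affine inequality on the \emph{means} with a constant safety margin $\Phi^{-1}(1-\epsilon)\sigma$: the quantities $\alpha_i\Omega$ and $f_k(\boldsymbol{\omega})$ are Gaussian with variances $\sigma_\Omega^2=\mathbf{1}^\top\Sigma_{\boldsymbol{\omega}}\mathbf{1}$ and $\sigma_{f_k(\boldsymbol{\omega})}^2=\sum_i a_{i,k}^2(\sigma_{\omega_i}^2+\alpha_i^2\sigma_\Omega^2)$ that depend only on $\Sigma_{\boldsymbol{\omega}}$, the PTDF entries and the \emph{prescribed} factors $\boldsymbol{\alpha}$, hence are constants; and the balance \eqref{UC:chance_balance} reduces to the nominal balance because $\sum_i\alpha_i=1$. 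Call the resulting deterministic surrogate (CC-DET). By definition, a line index $j$ belongs to $S_{CC-REAL}$ exactly when the reformulated line-$j$ pair of constraints is not implied by the remaining constraints of (CC-DET).

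Next I would observe that the CC-screening model \eqref{UC:chance_equai} for line $j$ is precisely (CC-DET) with (i) the line-$j$ pair \eqref{UC:screen_CC_line}--\eqref{UC:screen_CC_line2} removed, (ii) the integrality of $u_i$ relaxed to $0\le u_i\le 1$, and (iii) the objective replaced by $\max/\min f_j(\boldsymbol{0})=\mathbb{E}(f_j(\boldsymbol{\omega}))$. Consequently the feasible region of \eqref{UC:chance_equai} contains the feasible region of (CC-DET) with the line-$j$ pair dropped; and since $\mathbf{r}$ appears in neither the objective nor any retained line-flow constraint of \eqref{UC:chance_equai}, an optimal screening solution can be taken with every $r_i$ at its lower bound \eqref{UC:screen_CC_r}, so this containment is tight on the $\mathbf{x}$ variables.

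Then I would close the argument exactly as in Lemma~\ref{proof_robust}. Suppose, for contradiction, that $j\in S_{CC-REAL}$ but $j\notin\overline{S}_{CC}$. Since line $j$ is non-redundant in (CC-DET), there is a feasible point $(\tilde{\mathbf{u}},\tilde{\mathbf{x}},\tilde{\mathbf{r}})$ of (CC-DET) at which, say, the upper line-$j$ constraint is active, $\sum_i a_{i,j}(\tilde{x}_i-\hat{\ell}_i)=\overline{\mathbf{f}}_j-\Phi^{-1}(1-\epsilon_f)\sigma_{f_j(\boldsymbol{\omega})}$. That point is feasible for \eqref{UC:chance_equai} — it satisfies the $k\ne j$ line constraints, the generation and reserve bounds, and the nominal balance, and its binaries lie in $[0,1]$ — so the screening maximum of $f_j(\boldsymbol{0})$ is at least this active value (and symmetrically for the lower limit). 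Hence the screening test cannot certify line $j$ as redundant, i.e. $j\in\overline{S}_{CC}$, a contradiction. Therefore $S_{CC-REAL}\subseteq\overline{S}_{CC}$, so the reduced model that keeps only the limits in $\overline{S}_{CC}$ retains every non-redundant constraint of \eqref{UC_chance}; its feasible set — and thus its optimal commitment and dispatch — coincide with those of \eqref{UC_chance}.

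I expect the main obstacle to be the first step rather than the contradiction itself: one has to argue carefully that the chance-to-deterministic passage is \emph{exact} and, above all, that the variance terms carry no decision dependence, which is exactly where the assumption of \emph{fixed} participation factors $\boldsymbol{\alpha}$ is essential — were $\boldsymbol{\alpha}$ optimized, $\sigma_\Omega$ and $\sigma_{f_k(\boldsymbol{\omega})}$ would be nonconvex in the decisions and \eqref{UC:chance_equai} would cease to be a valid (LP) relaxation. The remaining delicate point is the handling of the free reserve variables $\mathbf{r}$, which is dispatched by the ``$r_i$ at its lower bound'' remark above.
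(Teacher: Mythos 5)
Your proposal is correct and follows essentially the same route as the paper: a contradiction argument resting on the fact that any feasible point of the (deterministically reformulated) CC-UC model remains feasible for the screening LP, which only drops the line-$j$ pair and relaxes integrality, so the screening maximum dominates any line-flow value attainable in the original model. Your write-up is in fact more careful than the paper's — you make the Gaussian-to-deterministic reduction and the constancy of the variance terms under fixed $\boldsymbol{\alpha}$ explicit, handle the reserve variables $\mathbf{r}$, and compare against the tightened limit $\overline{\mathbf{f}}_j-\Phi^{-1}(1-\epsilon_f)\sigma_{f_j(\boldsymbol{\omega})}$ rather than the raw limit, all of which the paper's proof glosses over.
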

\begin{proof}
 Lemma \ref{proof_cc} can be proved by contradiction. Assume that $S_{CC-REAL} \nsubseteq \overline{S}_{CC}$, then this case can occur: for a particular $\hat{\boldsymbol{\ell}}$, $f_j=\sum_{i=1}^{N}a_{i,j}(\Tilde{x}_i-\hat{\ell}_i)\geq\overline{f}_j$ can hold in \eqref{UC_chance}, while in \eqref{UC:chance_equai}, the obtained maximum of $f_j=\sum_{i=1}^{N}a_{i,j}({x}_i-\hat{\ell}_i) < \overline{f}_j$. 
 
 However, due to the fact that the feasible solution of \eqref{UC_robust} will always be feasible for \eqref{UC_Scr_robust}. This means in \eqref{UC:chance_equai} the maximum of $f_j= \sum_{i=1}^{N}a_{i,j}(\Tilde{x}_i-\hat{\ell}_i)\geq \overline{f}_j$. Then $S_{REAL-CC} \subseteq \overline{S}_{CC}$ can be proved by contradiction. Thus, the screening results given by \eqref{UC:chance_equai} can guarantee the feasibility of \eqref{UC_chance} unchanged.
\end{proof}

Solving the CC-UC screening model \eqref{UC:chance_equai} will lead to the reduced constraint set $\overline{S}_{CC}$ which will be non-redundant for the original CC-UC problem. Then it is safe to only include such a set for finding CC-UC's solutions.


\section{MPP-Based Screening Acceleration}

In the real-time operation stage, CC or robust UC models can be applied for the new-coming net demand instance to make the solution resilient to uncertainties. However, in the context of previous screening efforts, solving for each instance can be burdensome, especially for large-scale systems. To address this challenge, we propose the use of a multi-parametric linear programming (MPLP) approach, which is capable of handling varying parameters, such as the forecasted net demand. This approach accelerates the screening process by converting the screening models into affine functions.

\subsection{Mapping Operating Range to Optimal Line Flow}
MPLP is a method that enables the objective function and optimization parameters to be expressed as a function of parameters\cite{faisca2007multiparametric}. Indeed, the MPLP method can help find the mapping between the varying parameters of the UC model and the output variables. Our method draws insight from such a procedure, and given any parameter conditions in a predefined set, it is possible to identify the redundant constraints in UC models. Mathematically, consider general linear programming that the parameters appear on the right-hand side of the constraints,
\begin{subequations}\label{UC:MPP_LP}
\begin{align} 
\vspace{-5pt}\min_{\mathbf{x}} \quad &z: = \mathbf{a}^T\mathbf{y},\\
s.t.~~~ &\mathbf{A}\mathbf{y}\leq \mathbf{b} + \mathbf{F}\mathbf{\theta} \label{UC:MPP_LP_x}, \\
& \mathbf{\theta}  \in \mathbf{\Theta}.
\end{align}    
\end{subequations}
where $\mathbf{a}, \mathbf{A}, \mathbf{F}, \mathbf{b}$ are the coefficient vector or matrix with compatible dimensions. $\theta$ represents the varying parameters, such as the nominal load, renewable generation, electricity price, and others. Operators can typically deduce rough operating ranges for these parameters based on historical data. To utilize the MPLP, it is necessary to represent these patterns as a polyhedral convex set denoted by $\mathbf{\Theta}$.


Assume that the optimal solution set of \eqref{UC:MPP_LP} is $\mathcal{Y}^{*}$, and that each optimal solution $\mathbf{y}^{*}(\theta) \in \mathcal{Y}^{*}$ is associated with the parameter $\theta$. By solving the MPLP problem, both critical region $\Theta \subseteq \mathbf{\Theta}$ and the parametric expression of $z^{*}(\theta)$ for $\Theta$ (See Fig. \ref{fig: MPP_region}) can be found. The definition of a critical region is as follows.
\begin{definition}
Let $\mathbf{C}$ denote the set of constraint indices in \eqref{UC:MPP_LP_x}, an optimal partition of $\mathbf{C}$ associated with parameter $\theta$ is the partion $(\mathbf{C}^{A}, \mathbf{C}^{I})$ where,
\begin{subequations}
\begin{align}
\mathbf{C}^{A}(\theta) \triangleq \{j\in \mathbf{C}| \mathbf{A}_{j}\mathbf{y}^{*}(\theta) = \mathbf{b}_{j} + \mathbf{F}_{j}\mathbf{\theta}, \forall \mathbf{y}^{*}(\theta) \in  \mathcal{Y}^{*}\},\\
\mathbf{C}^{I}(\theta) \triangleq \{j\in \mathbf{C}| \mathbf{A}_{j}\mathbf{y}^{*}(\theta)< \mathbf{b}_{j} + \mathbf{F}_{j}\mathbf{\theta},\exists \mathbf{y}^{*}(\theta) \in  \mathcal{Y}^{*}\};
\end{align}
\end{subequations}
where $\mathbf{A}_j, \mathbf{b}_j, \mathbf{F}_j$ are the jth rows of $\mathbf{A}, \mathbf{b}, \mathbf{F}$. For any $\mathbf{s} \subseteq \mathbf{C}$, let $\mathbf{A}_\mathbf{s}$ and $\mathbf{F}_\mathbf{s}$ be the submatrices of $\mathbf{A}$ and $\mathbf{F}$, consisting of rows indexed by $\mathbf{s}$. Then the critical region $\Theta_{\mathbf{s}_0}$related to $\mathbf{s}_{0} \subseteq \mathbf{C} $ can be defined as,
\begin{align}
\Theta_{\mathbf{s}_0} \triangleq \{ \theta \in \mathbf{\Theta}|\mathbf{C}^{A}(\theta)= \mathbf{s}_{0}\}.
\end{align}
which is the set of all parameters $\theta \in \mathbf{\Theta}$ with the same active constraints set $\mathbf{s}_0$ at the optimum(s) of problem \eqref{UC:MPP_LP} \cite{ji2016probabilistic}.
\end{definition}
Regarding the relationship between the feasible parameter space $\mathbf{\Theta}$ and critical regions, it can be presented as the following:
\begin{lemma} (\cite{borrelli2003geometric})
$\mathbf{\Theta}$ can be uniquely partitioned into critical regions $\{\Theta_{i}\}$ with the assumption that there is no (primal or dual) degeneracy.
\end{lemma}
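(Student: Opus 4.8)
The approach is to identify the family of nonempty critical regions $\{\Theta_{\mathbf{s}}\}$ with the collection of level sets of the map $\theta \mapsto \mathbf{C}^{A}(\theta)$ on $\mathbf{\Theta}$ --- understood, as is standard, to be restricted to those $\theta$ for which \eqref{UC:MPP_LP} admits a finite optimum --- and then to verify that each such level set is a polyhedron. The key steps, in order, are: (i) show that under the non-degeneracy hypothesis the optimal active set $\mathbf{C}^{A}(\theta)$ introduced above is a single, well-defined index set for every admissible $\theta$; (ii) deduce from (i) that the regions $\Theta_{\mathbf{s}}$ are pairwise disjoint and cover $\mathbf{\Theta}$, hence form a partition, and that this partition is unique; (iii) show each $\Theta_{\mathbf{s}_0}$ is polyhedral.

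For step (i): fix an admissible $\theta$. Absence of dual degeneracy forces the primal optimum $\mathbf{y}^{*}(\theta)$ to be unique, so $\mathcal{Y}^{*}$ reduces to the singleton $\{\mathbf{y}^{*}(\theta)\}$ and the two set-builders defining $\mathbf{C}^{A}$ and $\mathbf{C}^{I}$ unambiguously refer to the active and inactive constraints of \eqref{UC:MPP_LP_x} at that one point. Absence of primal degeneracy then forces exactly $n := \dim(\mathbf{y})$ constraints to be active at $\mathbf{y}^{*}(\theta)$, with linearly independent rows; thus $\mathbf{s}_0 := \mathbf{C}^{A}(\theta)$ has cardinality $n$ and selects an optimal basis with $\mathbf{A}_{\mathbf{s}_0}$ invertible. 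I expect this to be the main obstacle: it is exactly here that both halves of the hypothesis are genuinely used, and it is what rules out the pathologies --- conflicting active sets arising from multiple optima, or degenerate vertices whose active set fails to pin down a basis --- that would otherwise destroy disjointness.

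Step (ii) is then immediate. If $\theta \in \Theta_{\mathbf{s}_1} \cap \Theta_{\mathbf{s}_2}$ then $\mathbf{s}_1 = \mathbf{C}^{A}(\theta) = \mathbf{s}_2$, so distinct critical regions are disjoint; and every admissible $\theta$ lies in $\Theta_{\mathbf{C}^{A}(\theta)}$, so the regions cover $\mathbf{\Theta}$. Uniqueness follows because a decomposition of a set into the level sets of a fixed function is unique.

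For step (iii): on $\Theta_{\mathbf{s}_0}$ the optimizer is given in closed form by $\mathbf{y}^{*}(\theta) = \mathbf{A}_{\mathbf{s}_0}^{-1}\bigl(\mathbf{b}_{\mathbf{s}_0} + \mathbf{F}_{\mathbf{s}_0}\theta\bigr)$, an affine function of $\theta$. Primal feasibility of this point, $\mathbf{A}\,\mathbf{y}^{*}(\theta) \le \mathbf{b} + \mathbf{F}\theta$, is a finite system of affine inequalities in $\theta$; dual feasibility (nonnegativity of the basic dual variables), for right-hand-side parametrization, does not involve $\theta$ and hence holds either for all $\theta$ or for none; and $\theta \in \mathbf{\Theta}$ is itself a polyhedral restriction. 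Their intersection is a polyhedron, and under non-degeneracy the set on which $\mathbf{C}^{A}(\theta) = \mathbf{s}_0$ exactly is its full-dimensional part, so the closures of the $\Theta_{i}$ tile $\mathbf{\Theta}$ while the $\Theta_{i}$ themselves partition it. I would defer to \cite{borrelli2003geometric} for the routine polyhedral bookkeeping in this last step, the conceptual content being entirely in step (i).
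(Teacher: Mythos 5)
The paper offers no proof of this lemma --- it is imported verbatim from \cite{borrelli2003geometric} --- and your reconstruction is precisely the standard argument given there: dual non-degeneracy yields a unique optimizer and hence a well-defined active set, primal non-degeneracy pins that active set to an invertible basis, and the level sets of $\theta \mapsto \mathbf{C}^{A}(\theta)$ then partition $\mathbf{\Theta}$ into (relatively open) polyhedral pieces whose closures tile it. Your proposal is correct and takes essentially the same route as the cited source.
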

Furthermore, the relationship between $z^{*}(\theta)$ and $\theta$ in \eqref{UC:MPP_LP} can be characterized as the following Theorem \ref{affine}.
\begin{theorem} \label{affine}
(\cite{borrelli2003geometric}) The objective function z*(·) is convex and piecewise affine over $\mathbf{\Theta}$ (in particular, affine in each critical region $\theta_{i}$). 
\end{theorem}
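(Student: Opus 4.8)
The plan is to establish this classical parametric LP result by the standard duality and basis-decomposition argument; since the statement is attributed to Borrelli et al., I would give a self-contained sketch rather than invoking a black box. First I would fix a critical region $\Theta_{\mathbf{s}_0}$ associated with an active set $\mathbf{s}_0 \subseteq \mathbf{C}$, and work under the stated non-degeneracy assumption so that on the interior of $\Theta_{\mathbf{s}_0}$ the optimal vertex $\mathbf{y}^{*}(\theta)$ is uniquely determined by the square, invertible system $\mathbf{A}_{\mathbf{s}_0}\mathbf{y} = \mathbf{b}_{\mathbf{s}_0} + \mathbf{F}_{\mathbf{s}_0}\theta$. Solving this gives $\mathbf{y}^{*}(\theta) = \mathbf{A}_{\mathbf{s}_0}^{-1}(\mathbf{b}_{\mathbf{s}_0} + \mathbf{F}_{\mathbf{s}_0}\theta)$, which is affine in $\theta$, and hence $z^{*}(\theta) = \mathbf{a}^{T}\mathbf{A}_{\mathbf{s}_0}^{-1}(\mathbf{b}_{\mathbf{s}_0} + \mathbf{F}_{\mathbf{s}_0}\theta)$ is affine on $\Theta_{\mathbf{s}_0}$. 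By the preceding lemma, the regions $\{\Theta_i\}$ partition $\mathbf{\Theta}$, so $z^{*}(\cdot)$ is piecewise affine over $\mathbf{\Theta}$.

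The remaining content is convexity, which I would obtain from LP duality. Writing the dual of \eqref{UC:MPP_LP}, the dual feasible set does not depend on $\theta$ (the parameter enters only the dual objective, linearly), so $z^{*}(\theta) = \max_{\lambda \in \mathcal{D}} \lambda^{T}(\mathbf{b} + \mathbf{F}\theta)$ over a fixed polyhedron $\mathcal{D}$ of dual feasible multipliers. For each fixed $\lambda$ the map $\theta \mapsto \lambda^{T}(\mathbf{b}+\mathbf{F}\theta)$ is affine, hence convex; a pointwise supremum of convex (indeed affine) functions is convex, so $z^{*}(\cdot)$ is convex on its domain. Continuity across region boundaries then follows from convexity together with finiteness, so the piecewise-affine pieces fit together into a single convex function, which is exactly the claim.

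The main obstacle, and the place where care is needed, is the handling of degeneracy and of the boundaries between critical regions. On the interior of each $\Theta_{\mathbf{s}_0}$ the argument above is clean, but one must argue that finitely many distinct active sets arise (there are at most $\binom{|\mathbf{C}|}{m}$ candidate bases, where $m = \dim \mathbf{y}$), that each induces a polyhedral critical region via the linear inequalities expressing primal feasibility of $\mathbf{y}^{*}(\theta)$ and dual feasibility of the associated multipliers, and that these regions cover $\mathbf{\Theta}$ and overlap only on lower-dimensional faces — which is precisely what the no-degeneracy hypothesis and the cited partition lemma buy us. I would therefore structure the write-up as: (i) non-degeneracy $\Rightarrow$ unique optimal basis on each region interior $\Rightarrow$ affine $\mathbf{y}^{*}$ and affine $z^{*}$ per region; (ii) finiteness of the set of bases $\Rightarrow$ finitely many affine pieces; (iii) dual representation $\Rightarrow$ global convexity; (iv) convexity + finiteness $\Rightarrow$ the pieces agree on shared boundaries, completing the proof. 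Since this is a restatement of a known theorem, I would keep steps (i)--(iv) brief and cite \cite{borrelli2003geometric} for the full technical details of the degenerate cases.
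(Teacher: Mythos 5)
Your proposal is correct. Note first that the paper itself offers no proof of this theorem: it is imported verbatim from \cite{borrelli2003geometric} as a known result of multi-parametric linear programming, so there is nothing in the paper to compare against line by line. Your reconstruction is the standard argument from that literature and all four steps are sound: under the no-degeneracy hypothesis of the preceding lemma the active set at the optimum is a basis, $\mathbf{y}^{*}(\theta)=\mathbf{A}_{\mathbf{s}_0}^{-1}(\mathbf{b}_{\mathbf{s}_0}+\mathbf{F}_{\mathbf{s}_0}\theta)$ is affine on each critical region, there are finitely many bases, and the dual representation gives global convexity as a pointwise maximum of functions affine in $\theta$.

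Two small remarks. First, mind the sign convention in the dual: for the minimization $\min\{\mathbf{a}^{T}\mathbf{y}: \mathbf{A}\mathbf{y}\le \mathbf{b}+\mathbf{F}\theta\}$ the dual objective is $-\lambda^{T}(\mathbf{b}+\mathbf{F}\theta)$ with $\lambda\ge 0$ and $\mathbf{A}^{T}\lambda=-\mathbf{a}$ (or equivalently your form with $\lambda\le 0$); the structure you rely on --- maximization of $\theta$-affine functions over a $\theta$-independent polyhedron --- is unaffected. Second, convexity can be obtained even more elementarily without duality: if $\mathbf{y}_1,\mathbf{y}_2$ are feasible for parameters $\theta_1,\theta_2$, then $t\mathbf{y}_1+(1-t)\mathbf{y}_2$ is feasible for $t\theta_1+(1-t)\theta_2$, so $z^{*}(t\theta_1+(1-t)\theta_2)\le t z^{*}(\theta_1)+(1-t)z^{*}(\theta_2)$ directly. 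Your step (iv) is phrased loosely (a convex finite function need not be continuous up to the boundary of its domain in general), but for an LP value function the epigraph is polyhedral, so continuity on the closed domain and agreement of the affine pieces on shared facets do hold; citing \cite{borrelli2003geometric} for the degenerate cases, as you propose, is the right call.
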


\begin{figure}[t]
    \hspace{0.15cm}
	\centering
	\includegraphics[width=0.8\linewidth]{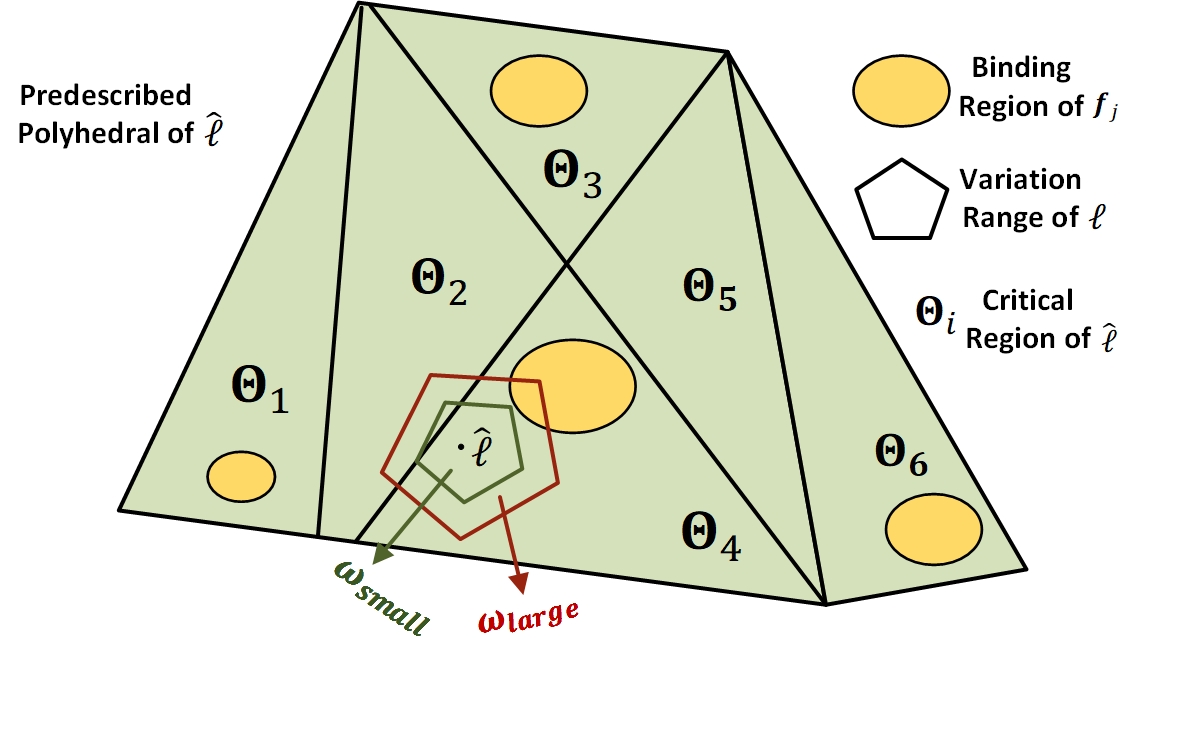}
 \vspace{-2em}
	\caption{\footnotesize Critical regions of forecasted net demand $\hat{\boldsymbol{\ell}}$ corresponding to one screening model for line $j$. In each critical region, there exists an affine function mapping $\boldsymbol{\hat{\ell}}$ to $f^{*}_j$. The binding region means that the net demand here can cause maximum line flow to reach the line limit. The uncertainty representation $\omega$ can help make the solution more reliable but at the cost of intersecting with more regions $\mathbf{\Theta}$.}
	\label{fig: MPP_region}
\end{figure}

To develop the MPLP of the screening models, we consider the forecasted net demand $\hat{\boldsymbol{\ell}}$ as the varying parameter $\theta$, rather than a fixed value in the original screening models \eqref{UC_Scr_robust} and \eqref{UC:chance_equai}. Using historical data, the operator may have the ability to deduce the variation range of $\hat{\boldsymbol{\ell}}$ for the managed UC instances and represent this range as a polyhedral set $\hat{\boldsymbol{\Theta}}$, as illustrated in Fig. \ref{fig: MPP_region}. The corresponding MPLP for this setting can be formulated as follows:
\begin{subequations}
\begin{align}  \label{UC:MPP_LP}
f^{*}_j(\mathbf{y}) =: \max _{\mathbf{y}}&~\text{or} ~\min _{\mathbf{y}} \quad f_j \\\
\text{s.t.}~\mathbf{y}&\in \mathcal{Y}{(\hat{\boldsymbol{\ell}})}, \\\hat{\boldsymbol{\ell}} &\in \hat{\boldsymbol{\Theta}}.
\end{align}    
\end{subequations}
Then, we can deal with the above model via an MPT3 toolbox \cite{6669862} offline to get the mapping $f^{*}_j(\hat{\boldsymbol{\ell}})$ between the maximum or minimum of $f_j$ and the incoming forecast of the nominal demand $\hat{\boldsymbol{\ell}}$. In each critical region $\hat{\boldsymbol{\Theta}}_i$, the mapping is a linear function,
\begin{subequations}\label{UC:MPP_LP}
\begin{align} 
f^{*}_{i,j}(\hat{\boldsymbol{\ell}}) = \hat{\boldsymbol{a}}_i^{T}\hat{\boldsymbol{\ell}} + \hat{\boldsymbol{b}}_i.
\end{align}    
\end{subequations}
where the derivation and computation of critical regions and piecewise affine functions can be found in \cite{faisca2007multiparametric}. Then, the online screening given the specific value of $\hat{\boldsymbol{\ell}}$ can be greatly accelerated by using $f^{*}_{i,j}(\hat{\boldsymbol{\ell}})$.

\subsection{Decomposition for the Large-Scale System}

In practice, large interconnected power systems are typically managed by independent system operators or regional transmission operators. Each operator has its operating area within which internal resources are used economically \cite{guo2017coordinated}. Then, the entire system can be decomposed into $N_{P}$ regions denoted by $\mathbf{P}$. The original PTDF-based formulation of line flow relies on all nodes in the entire system, which is hard to be decomposed into subsystems. Therefore, we reformulate the single-step UC into a decomposable model as follows,
\begin{subequations} \label{UC_de}
\begin{align}
\min _{\mathbf{u}, \mathbf{x}, \boldsymbol{\delta}}\quad  &\sum_{g=1}^{N_P}\sum_{i=1}^{N_{g}}  c_i^gx_i^g\\
\text { s.t. } \quad  &u_i \underline{x}^g_i \leq x^g_i \leq u^g_i \bar{x}^g_i, \quad i= 1,2,...,N_g, \forall g \in \mathbf{P},\\
&-\overline{\mathbf{f}}_{j,k}^g \leq L^{g}_{j,k}=\frac{(\delta^g_j -\delta^g_k)}{R_{j,k}} \leq \overline{\mathbf{f}}_{j,k}^g, ~ (j,k) \in \mathbf{F}_{g}, \forall g \in \mathbf{P}, \label{UC:flow_g}\\
&\sum_{k \in F^{g}_{i,k}}L^{g}_{i,k} + x_i^g = \hat{\boldsymbol{\ell}}_i^g, \quad i= 1,2,...,N_g, \forall g \in \mathbf{P}, \label{UC:balance_q}\\
& u_i^g \in \{0, 1\}, \quad i= 1,2,...,N_g, \forall g \in \mathbf{P}, \label{UC: temproal_q}\\
&\delta^{ref} = 0. \label{UC: ref}
\end{align}
\end{subequations}
where $N_g$ is the number of buses in the area $g$. $R_{j,k}$ is the reactance of the line connecting buses $j$ and $k$. $g$ is the area index. $\mathbf{F}_g$ is the index set of lines in area $g$. $\delta^g_j$ and $\delta^g_k$ are the voltage angles of bus $j$ and bus $k$, respectively. Note that for a bus $i$ without any unit, $x^g_i$ will be 0. $\delta^{ref}$ is the reference bus.

\begin{figure*}[t]
	\centering
	\includegraphics[width=0.9\linewidth]{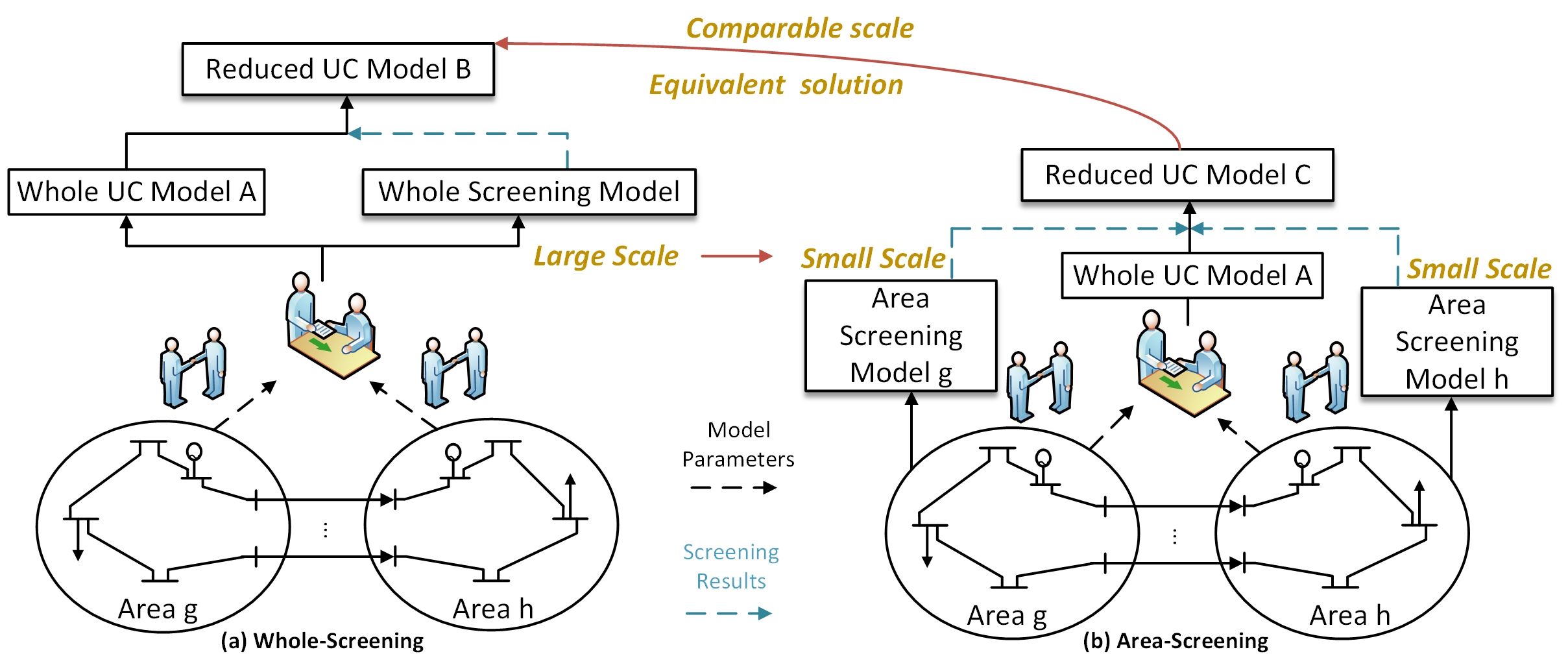}
	\caption{\footnotesize Schematic of Whole-Screening and Area-Screening approaches. Compared to the whole-screening model, the area-screening models only involves the local variables and constraints, which can be small scale and solved in parallel. Lemma \ref{lemma_area} guarantees that the resulted model C is equivalent to model B.}
\end{figure*}
For the whole system, consider the line in the area $g$ connected to the bus $w$ and bus $v$, the screening model will be
\begin{subequations} \label{UC_global_scr}
\begin{align} 
\min _{\mathbf{u}, \mathbf{x}, \boldsymbol{\delta}}~ \text{or}~ \max _{\mathbf{u}, \mathbf{x}, \boldsymbol{\delta}} \quad  &L^{g}_{w, v} \\
\text{s.t.} \quad &\eqref{UC:gen_0}, \eqref{UC:balance_q} , \eqref{UC: ref},\\
&-\overline{\mathbf{f}}_{j,k}^g \leq L^{g}_{j,k} \leq \overline{\mathbf{f}}_{j,k}^g, ~ (j,k) \in \mathbf{F}_{g/(w,v)}, \;\forall g \in \mathbf{P}, \label{UC:flow_scr_g} \\
&0 \leq u_i^g \leq 1, \quad i= 1,2,...,N_g, \; \forall g \in \mathbf{P}.\label{UC: temproal_q}
\end{align}
\end{subequations}
Further, if we only consider the line $(w,v) \in \mathbf{F}_g$ within area $g$, the screening model will be,
\begin{subequations} \label{Scr: area}
\begin{align} 
\min _{\mathbf{u}, \mathbf{x}, \boldsymbol{\delta}} \text{or} \max _{\mathbf{u}, \mathbf{x}, \boldsymbol{\delta}} \quad & L^{g}_{w, v}\\
&\text { s.t. } \quad  u_i \underline{x}^g_i \leq x^g_i \leq u^g_i \bar{x}^g_i, \quad i= 1,2,...,N_g, \label{UC:gen_scr_g}\\
&-\overline{\mathbf{f}}_{j,k}^g \leq L^{g}_{j,k} \leq \overline{\mathbf{f}}_{j,k}^g, ~ (j,k) \in \mathbf{F}_{g/(w,v)}, \label{UC:flow_scr_g}\\
&\sum_{k \in F^{g}_{i,k}}L^{g}_{i,k} + x_i^g = \hat{\boldsymbol{\ell}}_i^g, \quad i= 1,2,...,N_g, \label{UC:balance_scr_q}\\
&0 \leq u_i^g \leq 1, \quad i= 1,2,...,N_g.
\end{align}
\end{subequations}
Note that if the reference bus belongs to area $g$, \eqref{UC: ref} will be involved by the screening model \eqref{Scr: area}.
\begin{lemma} \label{lemma_area}
Denote the non-redundant line limits of the original UC model \eqref{UC_de} as $S_{ori}$, and the non-redundant line limits identified by the whole-screening model \eqref{UC_global_scr} as $\overline{S}_{whole}$ and the area-screening model \eqref{Scr: area} for the area $g$ as $\overline{S}_{g}$. Then $S_{ori} \subseteq \overline{S}_{whole} \subseteq \bigcup_{g=1}^{N_P}\overline{S}_{g}$. The constraint screening results obtained by \eqref{Scr: area} can guarantee the feasibility of \eqref{UC_de} unchanged. 
\end{lemma}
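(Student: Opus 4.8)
The plan is to follow the same relaxation/contradiction scheme used for Lemmas~\ref{proof_robust} and~\ref{proof_cc}, but to apply it twice, once for each inclusion in the chain $S_{ori}\subseteq\overline{S}_{whole}\subseteq\bigcup_{g}\overline{S}_g$, exploiting at each step that the next screening model is a relaxation of the previous one once we restrict attention to the variables on which the objective line flow depends.

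First I would prove $S_{ori}\subseteq\overline{S}_{whole}$. The whole-screening model \eqref{UC_global_scr} is obtained from the decomposable UC model \eqref{UC_de} by (i) replacing the cost objective with $\pm L^{g}_{w,v}$, (ii) relaxing $u_i^{g}\in\{0,1\}$ to $u_i^{g}\in[0,1]$, and (iii) deleting the flow limit on the very line $(w,v)$ under test. Operations (ii) and (iii) only enlarge the feasible set and (i) does not affect it, so every $(\mathbf{u},\mathbf{x},\boldsymbol{\delta})$ feasible for \eqref{UC_de} is feasible for \eqref{UC_global_scr}. Hence the optimal value of $L^{g}_{w,v}$ over \eqref{UC_global_scr} is at least (for $\max$), respectively at most (for $\min$), its value at any feasible point of \eqref{UC_de}. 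Consequently, if \eqref{UC_global_scr} declares a limit redundant, i.e.\ the optimal $L^{g}_{w,v}$ stays strictly inside $[-\overline{\mathbf{f}}_{w,v}^{g},\,\overline{\mathbf{f}}_{w,v}^{g}]$, then $L^{g}_{w,v}$ is strictly interior at every feasible point of \eqref{UC_de}, so the limit is redundant for \eqref{UC_de} as well. Contraposition gives $S_{ori}\subseteq\overline{S}_{whole}$, exactly mirroring the earlier proofs.

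Next I would prove $\overline{S}_{whole}\subseteq\bigcup_{g}\overline{S}_g$. Fix a line $(w,v)$ and let $g$ be the unique area with $(w,v)\in\mathbf{F}_g$. The structural fact to check is that the area-screening model \eqref{Scr: area} is precisely the whole-screening model \eqref{UC_global_scr} after deleting all variables $(\mathbf{u}^{g'},\mathbf{x}^{g'},\boldsymbol{\delta}^{g'})$ and constraints of every area $g'\neq g$ (and, if the reference bus does not lie in area $g$, also deleting \eqref{UC: ref}): the generation bounds, the balance equations \eqref{UC:balance_q}, and the retained flow limits all decouple by area, and the objective $L^{g}_{w,v}=(\delta^{g}_{w}-\delta^{g}_{v})/R_{w,v}$ involves only area-$g$ angles. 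Since this is a deletion of variables and constraints, restricting any feasible solution of \eqref{UC_global_scr} to its area-$g$ components yields a feasible solution of \eqref{Scr: area} with the same value of $L^{g}_{w,v}$; the problem stays well posed because $L^{g}_{w,v}$ remains bounded through the area-$g$ balance equation at bus $w$ and the retained flow limits. Therefore the optimal $L^{g}_{w,v}$ over \eqref{Scr: area} is $\geq$ (for $\max$) and $\leq$ (for $\min$) its optimal value over \eqref{UC_global_scr}, so any limit flagged non-redundant by the whole-screening model is flagged non-redundant by the area-$g$ screening model, i.e.\ lies in $\overline{S}_g\subseteq\bigcup_{g}\overline{S}_g$.

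Combining the two inclusions gives $S_{ori}\subseteq\overline{S}_{whole}\subseteq\bigcup_{g}\overline{S}_g$; since the reduced model keeps every limit in $\bigcup_{g}\overline{S}_g$ it keeps every limit in $S_{ori}$, so it drops only limits redundant for \eqref{UC_de}, leaving its feasible region and hence its optimal commitment and dispatch unchanged, which is the claimed feasibility guarantee. I expect the main obstacle to be the middle step: one must argue carefully that stripping out the other areas' blocks (and, when applicable, the reference-angle constraint) genuinely \emph{relaxes} the whole-screening problem rather than changing its optimal objective, which hinges on the per-area separability of the angle-difference formulation \eqref{UC_de} and on $L^{g}_{w,v}$ being a function of area-$g$ variables alone; the two outer inclusions are then routine repetitions of the relaxation argument already established for Lemmas~\ref{proof_robust} and~\ref{proof_cc}.
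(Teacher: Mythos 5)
Your proposal is correct and follows essentially the same route as the paper: both inclusions are established by observing that each successive screening model is a relaxation (the paper cites \cite{zhai2010fast} for $S_{ori}\subseteq\overline{S}_{whole}$ rather than re-deriving it, and proves $\overline{S}_{whole}\subseteq\bigcup_{g}\overline{S}_{g}$ by the same feasible-solution-carries-over argument you give, phrased as a contradiction). Your treatment is in fact slightly more careful than the paper's on the middle step, since you make explicit the restriction of a whole-system feasible point to its area-$g$ components and the per-area separability on which that restriction depends.
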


\begin{proof}
$S_{ori} \subseteq \overline{S}_{whole}$ has been proved in \cite{zhai2010fast} and thus we only need to prove that $\overline{S}_{whole} \subseteq \bigcup_{g=1}^{N_P}\overline{S}_{g}$. This can be also proved by contradiction. Assume that $\overline{S}_{whole} \nsubseteq \bigcup_{g=1}^{N_P}\overline{S}_{g}$, then the following case can occur: for a particular limit of $L^{g}_{w, v}$, it is identified as a redundant constraint by \eqref{Scr: area} while it is non-redundant for \eqref{UC_global_scr}. This means the maximum (or minimum) of \eqref{UC_global_scr} is larger (or lower) than that of \eqref{Scr: area}, which indicates that the optimal solution of \eqref{UC_global_scr} is not feasible for \eqref{Scr: area}. However, it can be seen that the constraint set of \eqref{Scr: area} is a subset of the constraint set of \eqref{UC_global_scr}. This means the feasible solution of \eqref{UC_global_scr} will be always feasible for \eqref{Scr: area}. Clearly, $\overline{S}_{whole} \subseteq \bigcup_{g=1}^{N_P}\overline{S}_{g}$ can be proved by the contradiction and thus $S_{ori} \subseteq \overline{S}_{whole} \subseteq \bigcup_{g=1}^{N_P}\overline{S}_{g} $ holds. Consequently, the screening results given by \eqref{Scr: area} can guarantee the feasibility of \eqref{UC_de} unchanged.
\end{proof}

Then, \eqref{Scr: area} can be solved as an MPLP to get the mapping from the tie-lie scheduling to optimal line flow. For the area $g$, the affine policy set $\mathcal{F}_g$ can be given as,
\begin{align} \label{UC:MPP_de_LP}
\mathcal{F}_g =: \{{L^{g,*}_{w, v}(\hat{\boldsymbol{l}}_g) = \hat{\boldsymbol{a}}_i^{T}\hat{\boldsymbol{l}}_g + \hat{\boldsymbol{b}}_i, i \in \boldsymbol{I}^{\Theta}_{g,j}, j \in \mathbf{F}_{g}\}};
\end{align}    
where $\boldsymbol{I}^{\Theta}_{g,j}$ is the index set of critical region of the line $j$ in area $g$. For each area $p \in \mathbf{P}$, we can get the affine policy set $\mathcal{F}_g$ and thus achieve the constraint screening for the whole large interconnected power system.  

The above multi-area formulations \eqref{UC_de} and \eqref{Scr: area} can be easily extended to the uncertain formulations as \eqref{UC_robust} and \eqref{UC_chance}, which are still the LPs that can be transferred to the affine policy for acceleration.

\section{Case studies}
In this section, we evaluate the performance of the proposed constraint screening method for UC problem under uncertainty and explore the impacts brought by net demand uncertainly upon the screening results over a wide range of problem settings.  We demonstrate the clear advantage of the proposed MPP-based screening procedure in both performance and efficiency.

\subsection{Simulation Setup}
We carry out the numerical simulations on 39-, 118- and 300-bus power systems to verify
the proposed method’s effectiveness and scalability. The configurations of the investigated systems are referred to as the corresponding cases in MATPOWER. We select 10 nodes to represent the critical nodes whose load variations are relatively larger and have more impact on the line flows. We assume that there exists the perturbation $\Tilde{\boldsymbol{\omega}}$ in the forecasted net demands $\hat{\boldsymbol{\ell}}$ of the selected nodes, and $\hat{\boldsymbol{\ell}}$ is in the predefined polyhedral. To realize different patterns of $\Tilde{\boldsymbol{\omega}}$, in the CC-UC cases, we assume $\epsilon_f=\epsilon_x=\epsilon$ and $\sigma_{\omega_i}$ with results given in Table \ref{table2} and Fig. \ref{num_s1_s3}-\ref{UC_binding_10_10}. In the robust screening cases, we test the settings of $\beta_1$ and $\beta_2$ given in Table \ref{table3}. 


All simulations have been carried out on an unloaded MacBook Air with Apple M1 and 8G RAM. Specifically, all the optimization problems are modeled and solved using YALMIP toolbox and MPT3 toolbox in MATLAB R2022b.

For the sake of convenience, we use the expression given in Fig. \ref{fig: UC_Screening_Models} to refer to the involved models in the remaining part:
\begin{enumerate}
\item \textbf{UC models:} \texttt{M1} (Original), \texttt{M2} (CC) and \texttt{M3} (Robust).
\item \textbf{Screening models:} \texttt{S1} (Original), \texttt{S2} (CC) and \texttt{S3} (Robust).
\item \textbf{Index sets of redundant constraints:} $S_{ORI}$ (given by \texttt{S1}), $S_{CC}$ (given by \texttt{S2}) and $S_{RO}$ (given by \texttt{S3}).
\item \textbf{Reduced UC models:} \texttt{T1} (\texttt{M1} removes $S_{ORI}$), \texttt{T2} (\texttt{M1} removes $S_{CC}$), \texttt{T3} (\texttt{M1} removes $S_{RO}$), \texttt{T4} (\texttt{M2} removes $S_{CC}$),  \texttt{T5} (\texttt{M3} removes $S_{RO}$). \
\end{enumerate}

\subsection{Number of Screened Constraints}
We solve the screening models \texttt{S1-S3} for each line limit of 39-bus system and 118-bus system, and record the number of redundant constraints as shown in Fig. \ref{num_s1_s3}. It can be seen that \texttt{S2} tends to find the most redundant constraints, the numbers are 87 and 348. \texttt{S3} screens out the least, the numbers are 79 and 341. To see this, for the same forecast $\hat{\boldsymbol{\ell}}$, the screened region of \texttt{S2} is the smallest one and the maximum line flow over this region may not reach the removed line limit in \texttt{M2}. Meanwhile, the corresponding line flow can reach the original line limit in the region outside the screened region of \texttt{S2}. Then, a line limit can be redundant for \texttt{S2}, while it is non-redundant for \texttt{S1} and \texttt{S3}. On the contrary, the screened region of \texttt{S3} covers that of \texttt{S1}, so a line limit can be non-redundant for \texttt{S3}, while it is redundant for \texttt{S1}. This corresponds to the fact that robust optimization often considers a more conservative uncertainty set, and in our case, we have more constraints reserved. In Fig. \ref{UC_binding_10_10}, it can be found that even though the line limits and feasible region can be tightened in \texttt{S2}, the locations of the redundant constraints in the topology are still similar to \texttt{S1} and \texttt{S3}. Besides, most of the lines whose limits are non-redundant are connected or closer to the generators, e.g., line 33-19, line 35-22, and line 38-29, which are expected.
\begin{figure}[h]
	\centering
	\includegraphics[width=0.95\linewidth]{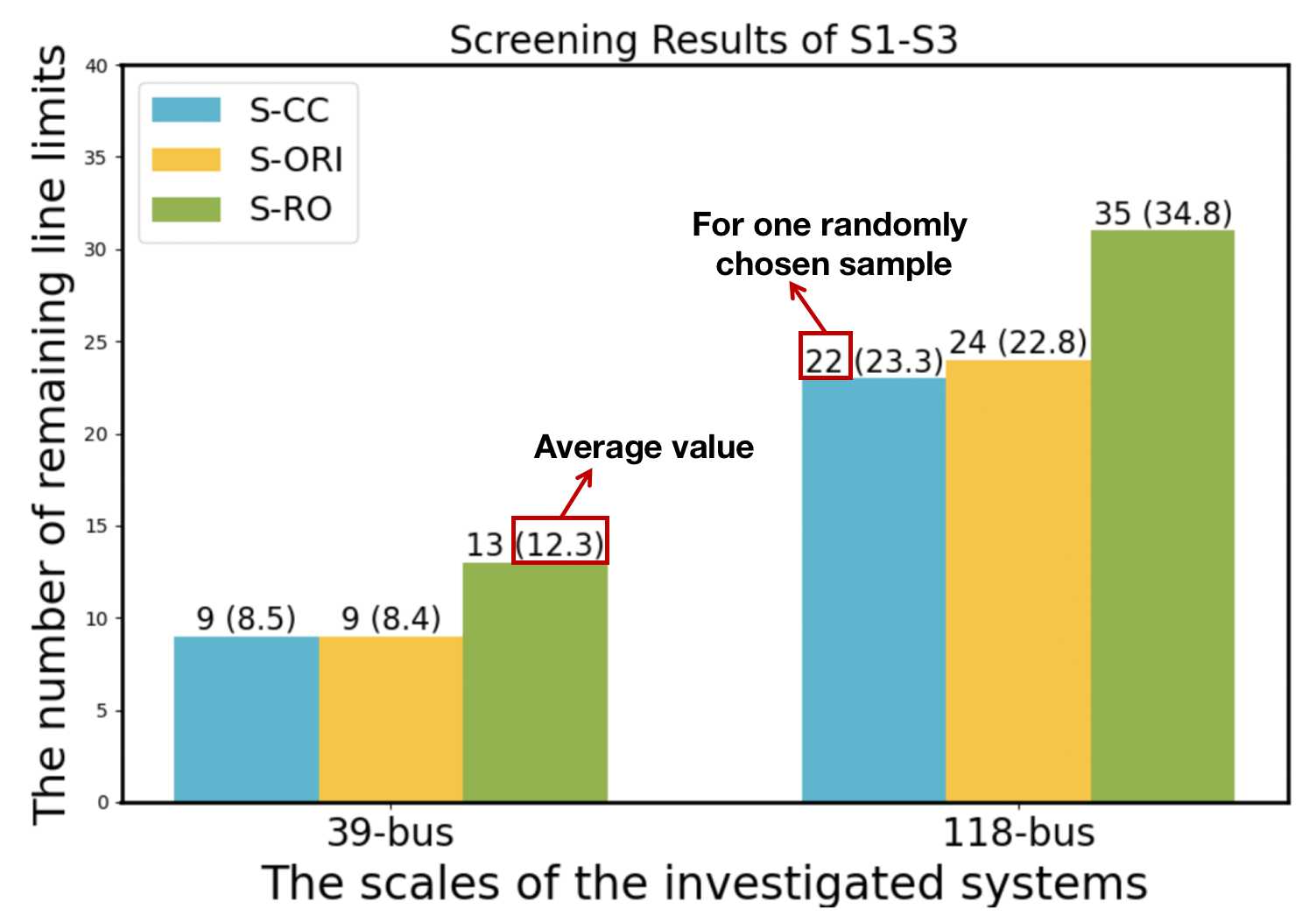}

	\caption{\footnotesize The number of non-redundant line limits given by original, chance-constrained ($\epsilon = 10\%, \sigma_{\omega_i}=1$) and robust screening($\beta_1=0.7, \beta_2 = 1.3$).} \label{num_s1_s3}
\end{figure}

We compare the results of \texttt{S2} with varying settings of $\epsilon$ and $ \sigma_{\omega_i}$ in Fig. \ref{num_s1_s3} and  Table \ref{table2}.
It can be seen that the number of non-redundant limits of the 39-bus system increases to 10 while that of the 118-bus system decreases to 17 with larger $\sigma_{\omega_i}$. This indicates that $\sigma_{\omega_i}$ may have a larger impact than $\epsilon$ on the binding situations of the constraints in the CC-Screening cases. Regarding the size of $S_{RO}$, from Table \ref{table3}, it can be found that using \texttt{S3} will remain more non-redundant constraints for larger variation ranges of uncertainty $\omega$, since there are more binding situations corresponding to the larger load region as shown in Fig. \ref{fig: MPP_region}.

\begin{table}[t]
\centering
\caption{Feasibility Analysis of Testing 2.}\label{table2}
\vspace{-0em}
\setlength{\tabcolsep}{0.35mm}{
\begin{threeparttable}{
\begin{tabular}{c|ccc|ccc}
\hline
\multicolumn{1}{c|}{\multirow{2}*{\diagbox{Bus}{Case}}} & \multicolumn{3}{c|}{$\epsilon=5\%,\sigma_{\omega_i}=1$}                                                                     & \multicolumn{3}{c}{$\epsilon=5\%,\sigma_{\omega_i}=10$}                                                                    \\ \cline{2-7} 
                  & \multicolumn{1}{l}{No.Limits} & \multicolumn{1}{l}{Infea. Rate } & \multicolumn{1}{l|}{Solu. Gap } & \multicolumn{1}{l}{No.Limits} & \multicolumn{1}{l}{Infea. Rate} & \multicolumn{1}{l}{Solu. Gap} \\ \hline
39                & 9                             & 0.0\%                              & 0.0\%                              & 10                            & 10.0\%                            & 0.0\%                             \\
118               & 22                            & 0.0\%                             & 0.0\%                              & 17                            & 5.0\%                             & 0.0\%                             \\ \hline
\end{tabular}}
 \begin{tablenotes}
        \footnotesize
        \item * `No. Limits' denotes the number of non-redundant line limits.
      \end{tablenotes}
  \end{threeparttable}
}
\vspace{-0em}
\end{table}

\begin{figure}[t]
	\centering
	\includegraphics[width=0.99\linewidth]{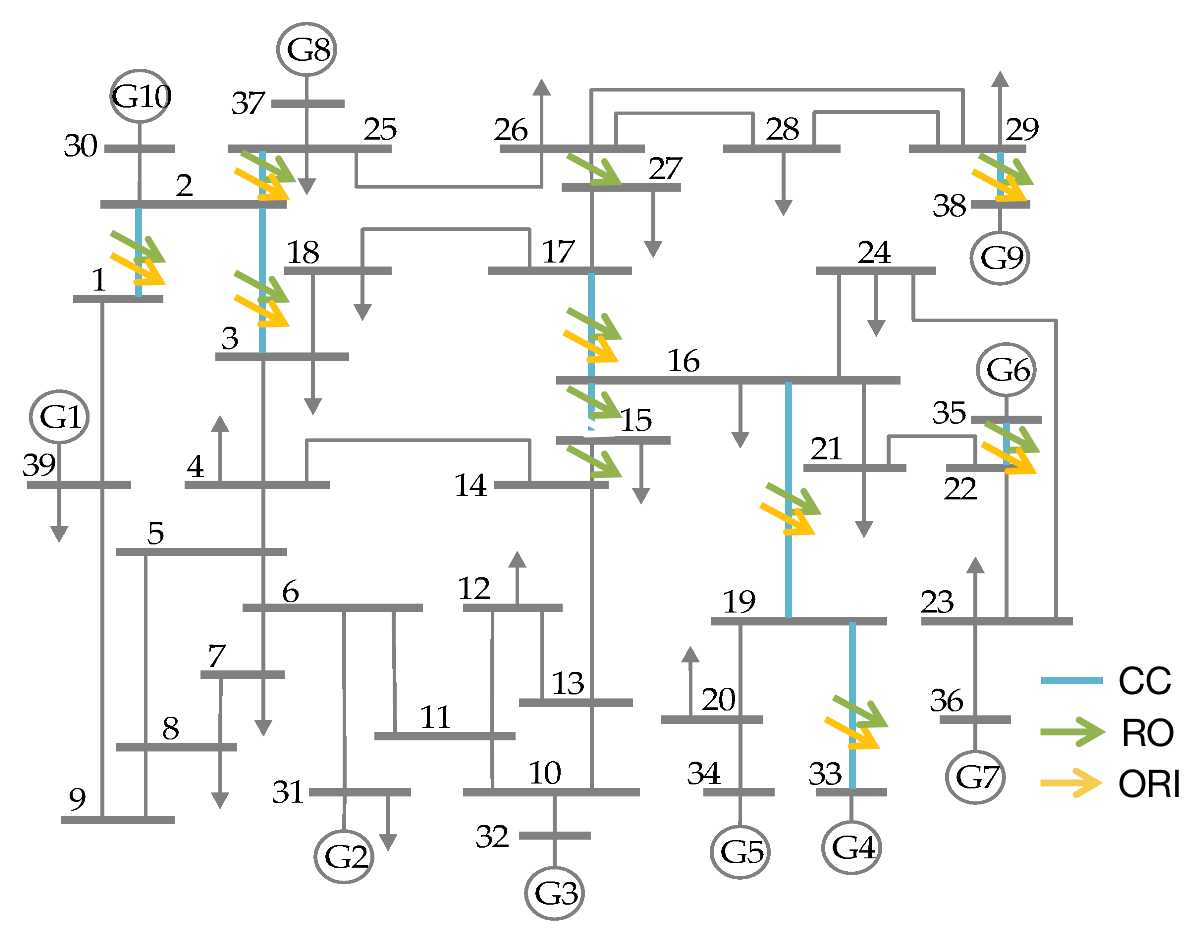}
	\caption{\footnotesize The identified non-redundant line limits of original, chance-constrained ($\epsilon =10\%, \sigma_{\omega_i}=10$) and robust ($\beta_1=0.7, \beta_2 = 1.3$) screening for 39-bus system.} \label{UC_binding_10_10}
\end{figure}

\begin{table*}[ht]
\centering
\caption{Feasibility Analysis of Testing 3.} \label{table3}
\vspace{0em}
\setlength{\tabcolsep}{1.4mm}{
\begin{tabular}{c|c|c|c|c}
\hline
\multicolumn{1}{c|}{\multirow{2}*{\diagbox{Bus}{Case}}} & $\beta_1(0.9),\beta_2(1.1)$ & $\beta_1(0.7),\beta_2(1.3)$ & $\beta_1(0.5),\beta_2(1.5)$ & \multirow{2}{*}{Infea. Rate}       \\ \cline{2-4} 
                  & No.Limits                 & No.Limits                 & No.Limits                 & \\ \cline{1-5}
39                & 10                        & 13                        & 13                        &    0.0\%                \\
118               & 30                        & 35                        & 36                        &          0.0\%           \\ \hline
\end{tabular}}
\end{table*}

\begin{table}[t]
\vspace{-0em}
\centering
\caption{Feasibility Analysis of Testing 4.} \label{table4}
\vspace{-0em}
\setlength{\tabcolsep}{0.4mm}{
\begin{threeparttable}{\begin{tabular}{c|ccc|ccc}
\hline
\multicolumn{1}{c|}{\multirow{2}*{\diagbox{Bus}{Case}}}& \multicolumn{3}{c|}{$\epsilon=5\%,\sigma_{\omega_i}=1$}                                                                     & \multicolumn{3}{c}{$\epsilon=5\%,\sigma_{\omega_i}=10$}                                                                    \\ \cline{2-7} 
                  & \multicolumn{1}{l}{No.Limits} & \multicolumn{1}{l}{Infea. Rate} & \multicolumn{1}{l|}{Solu. Gap} & \multicolumn{1}{l}{No.Limits} & \multicolumn{1}{l}{Infea. Rate} & \multicolumn{1}{l}{Solu. Gap} \\ \hline
39                & 9                             & 3.0\%                         & 0.6\%                         & 10                            & 7.5\%                         & 4.4\%                         \\
118               & 22                            & 0.0\%                              & 0.4\%                         & 17                            & 5.0\%                         & 9.5\%                         \\ \hline
\end{tabular}}
  \end{threeparttable}
}
\end{table}

\subsection{Feasibility of Reduced Problems}
After we apply \texttt{S1-S3} for the forcast $\hat{\boldsymbol{\ell}}$ and remove $S_{ORI}$, $S_{CC}$ and $S_{RO}$ from \texttt{M1} respectively offline, we can validate screening performance of model \texttt{T1-T3} by obtaining the UC solutions for the groundtruth $\boldsymbol{\ell}$. For 39-bus system, $S_{ORI}$ are the same as $S_{CC}$ under the setting of $\epsilon=5\%, \sigma_{\omega_i}=1$,  while under the setting of $\epsilon=5\%, \sigma_{\omega_i}=10$, the remaining constraints of \texttt{T2} cover that of \texttt{T1}. As seen from Table \ref{table2}, almost all of the solutions given by \texttt{T2} are feasible, indicating the solution technique is reliable. The infeasible cases mean the removed constraint will be violated in \texttt{M1}, which can be caused by both \texttt{T1} and \texttt{T2}.

For \texttt{T2}, it appears that when the deviation caused by uncertainty is small, the solution provided by \texttt{T2} can achieve a 0\% infeasibility rate. On the other hand, when $\sigma_{\omega_i}=10$, the infeasibility rate increases to 10\% for the 39-bus system and 5\% for the 118-bus system. For the realizations with feasible solutions, we calculate the gap between the UC costs given by \texttt{T1} and \texttt{T2}. The results for the 118-bus system suggest that, although the indexes of non-redundant constraints in \texttt{T2} are a subset of \texttt{T1}, the solution gap might potentially be 0. This observation implies that the actual active line flows for those feasible uncertainty realizations could be a subset of both \texttt{T1} and \texttt{T2}. 
\begin{figure}[t]
	\centering
	\includegraphics[width=0.93\linewidth]{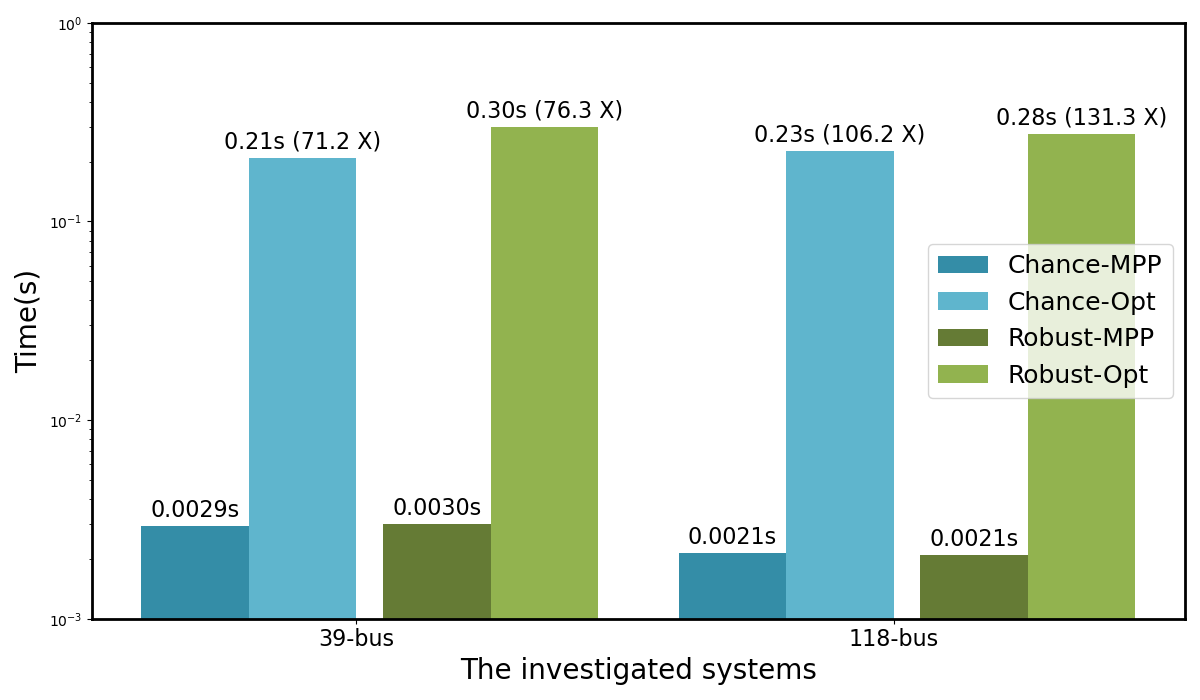}
	\caption{\footnotesize The average screening time for a single line limit using MPP-based and Optimization-based methods. } \label{MPP_time_singe}
\end{figure}

\begin{figure}[h]
	\centering
	\includegraphics[width=0.96\linewidth]{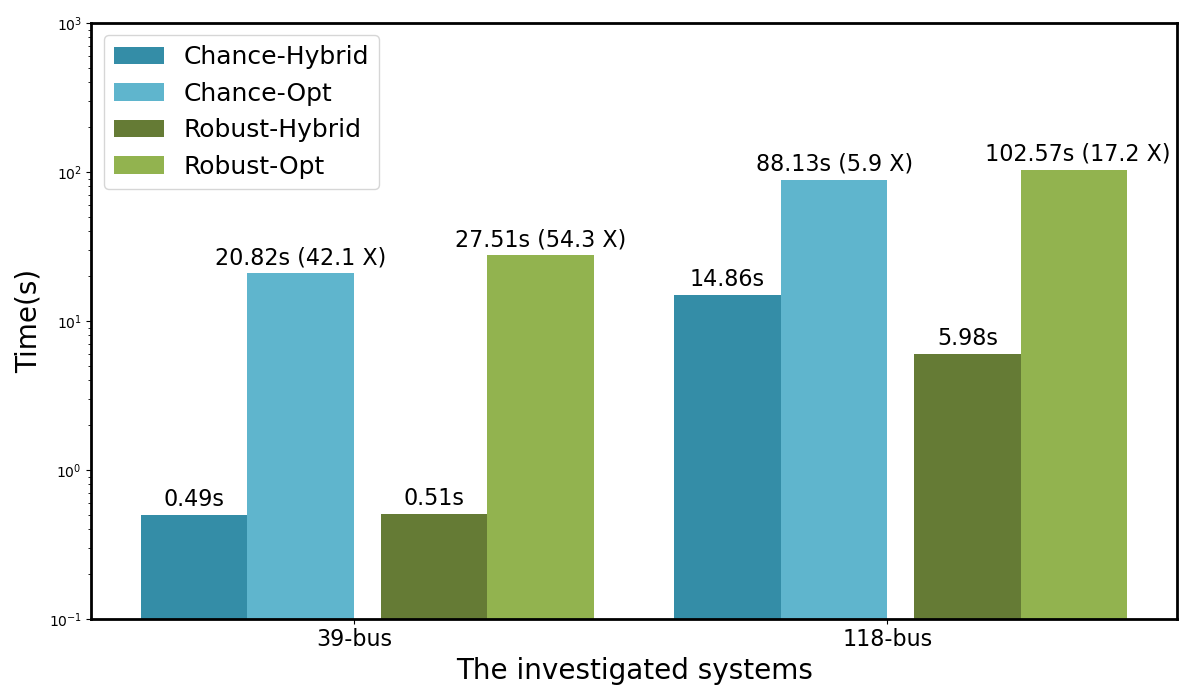}
	\caption{\footnotesize The total screening times of Hybrid and Optimization only methods.} \label{MPP_time_total}

\end{figure}

An alternative way to address the impact of uncertainty on the solution of \texttt{M2} is using the affine generation policy in \eqref{UC:ge_control} for \texttt{T4}, allowing direct generation calculation without solving \texttt{T2} for each uncertainty realization. Results in Table \ref{table4} suggest that infeasibility may arise despite small uncertainty, and for some feasible solutions, like the 118-bus system with $\epsilon=5\%, \sigma_{\omega_i}=1$, a small gap remains between the actual optimal UC cost and the affine policy-determined cost. As the uncertainty increases, the infeasibility rate might exceed the risk level $\epsilon$, and the solution gap could also become larger. Therefore, although the affine generation policy could potentially reduce the online solution time for the ground truth $\boldsymbol{\ell}$, it might also raise the possibility of infeasible solutions.

Regarding the performance of robust screening, Lemma 1 guarantees the feasibility of \texttt{T5}. Besides, the solution provided by \texttt{T3} will be always feasible for \texttt{M1} when $\Tilde{\boldsymbol{\omega}}$ satisfies \eqref{UC:uncertainty_box}. Empirical results in Table \ref{table3} support this analysis, as they demonstrate that, for both 39-bus and 118-bus systems, the solutions given by \texttt{T3} for 200 uncertainty realizations are all feasible for the original UC problem \texttt{M1}, regardless of the three settings for $\beta_1$ and $\beta_2$. Note that though \texttt{T3} offers better feasibility under uncertainty, its solution time is longer than \texttt{T1} and \texttt{T2} due to more remaining constraints in the UC model.

\begin{figure}[t]
	\centering
	\includegraphics[width=0.99\linewidth]{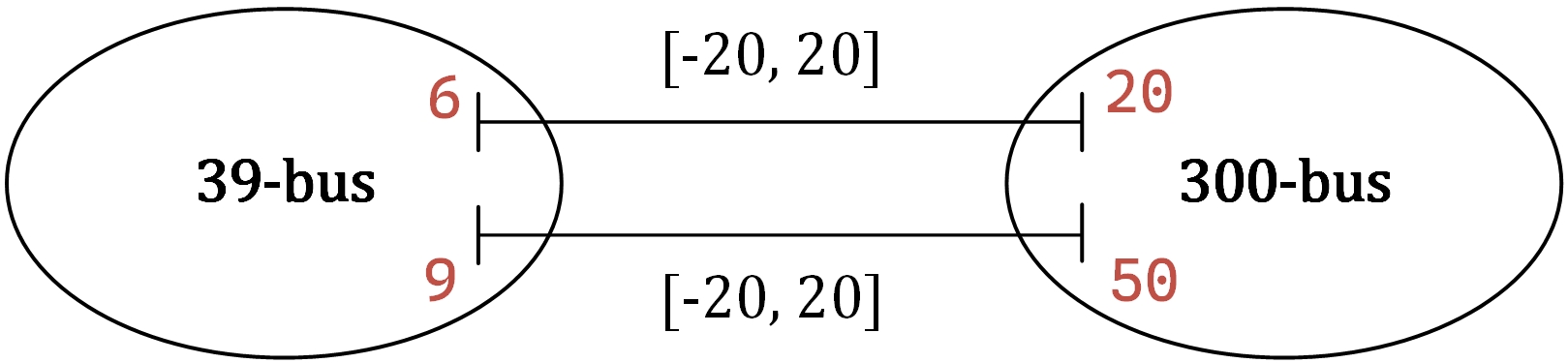}
	\caption{\footnotesize Interconnected power system consists of 39-bus system and 300-bus system.}
	\label{fig: S_area}
\end{figure}

\begin{table*}[]
 \centering
\caption{Screening results for the interconnected system.}\label{table_large}
\setlength{\tabcolsep}{6mm}{
\begin{tabular}{c|cc|c}
\hline
\multirow{2}{*}{Area} & \multicolumn{2}{c|}{Number of binding line limits} & \multirow{2}{*}{$\overline{S}_{whole} \subseteq \bigcup_{g=1}^{N_P}\overline{S}_{g}?$} \\ \cline{2-3}
                      & Whole                 & Decomposed                 &                                                                                        \\ \hline
39-bus                & 2045                  & 2068                       & \multirow{2}{*}{$\surd$}                                                               \\
300-bus               & 4933                  & 4933                       &                                                                                        \\ \hline
\end{tabular}
}
\end{table*}

\begin{table*}[]
\centering
\caption{Online screening time of different methods.}\label{table_time}
\setlength{\tabcolsep}{2mm}{
\begin{tabular}{c|cc|ccc}
\hline
\multirow{2}{*}{Area} & \multicolumn{2}{c|}{Total Time (s)}                   & \multicolumn{3}{c}{Time for Single Limit (s)} \\ \cline{2-6} 
                      & Whole Optimization                     & Decomposed Optimization                & Wh-Opt        & De-Opt        & De-MPP        \\ \hline
39-bus                & \multirow{2}{*}{40203.36} & \multirow{2}{*}{37483.16} & 17.61         & 15.42         & 0.06          \\
300-bus               &                           &                           & 16.95         & 16.31         & 0.13          \\ \hline
\end{tabular}
}
\end{table*}

\subsection{Screening Time Comparison }
To investigate the potential of utilizing MPP to accelerate the screening models with different uncertainty formulations, we develop the affine policy for \texttt{S2} and \texttt{S3} via MPP method, respectively. In cases where MPP is not applicable to the screening model or the forecasted load, we revert to solving optimization for \texttt{S2} and \texttt{S3} directly, and we can treat this procedure as \textit{hybrid screening}. Consequently, the total screening time of \textit{hybrid screening} is the sum of using affine policy and initial screening. Fig. \ref{MPP_time_singe} shows that applying the affine policy can accelerate single-line screening by 71.2X to 131.3X, with more significant improvements in the 118-bus system due to its longer initial screening time compared to the 39-bus system. 

Regarding the total screening time, as illustrated in Fig. \ref{MPP_time_total}, the screening procedures of 39-bus and 118-bus systems are accelerated by 5.9X to 54.3X via applying the affine policies. The improvements in total screening time are not as substantial as those for single-line screening, potentially because the initial screening time is much longer than the affine policy. Though hybrid screening can add computation time for a few instances, using MPP still reduces the computation burden a lot.

 To verify the applicability of decomposed screening, we examine the interconnected power system, comprised of a 39-bus system and a 300-bus system (refer to Fig. \ref{fig: S_area}). We gather the screening results of 100 samples using both whole-screening and area-screening methods, as detailed in Table \ref{table_large}. The number of non-redundant constraints identified by the two screening methods is close, with the results of the whole-screening method being a subset of those obtained from the area-screening method. Regarding the screening time, Table. \ref{table_time} shows that the decomposed screening can realize 1.1X acceleration for the total time. For a single line limit, the derivation of MPP policy for the whole system fails, while it can work for the decomposed screening model. The online screening procedure can be accelerated from 125X to 257X via MPP policy for a single line limit of 39-bus or 300-bus system.

\section{Conclusion and Future Work}
In this work, we consider a critical while underexplored constraint screening setting under uncertain electricity demands, along with a novel solution technique for such uncertainty-aware screening problems. The impact of such uncertainty on the screening results, especially the infeasibility of the reduced problems, is investigated both theoretically and empirically under chance-constrained and robust formulations. Further, the multi-parametric program theory and the multi-area screening are demonstrated to be effective for greatly accelerating the screening procedures. This work demonstrates the potential of efficiently and reliably solving UC under forecasting uncertainties. In the future work, it is desirable to assume more general properties of the future load or renewables distribution, which could lead to constraint screening models under distributionally robust settings. We are also interested in investigating the role of data and machine learning for solving UC screening problems.






\bibliographystyle{IEEEtran}
\bibliography{bib}






\end{document}